\renewcommand{\baselinestretch}{1.54}
\DeclareMathAlphabet{\mathpzc}{OT1}{pzc}{m}{it}
\newtheorem{proposition}{Proposition} 
\newtheorem{theorem}{\textbf{\textsc{Theorem}}}
\begin{document}
%
\renewcommand{\baselinestretch}{1}
\title{Optimal and Low-Complexity Dynamic Spectrum Access for RF-Powered Ambient Backscatter System with Online Reinforcement Learning}

\author{Nguyen~Van~Huynh, Dinh~Thai~Hoang, Diep~N.~Nguyen, Eryk~Dutkiewicz,\\
	Dusit~Niyato, and~Ping~Wang
\thanks{Preliminary results in this paper will be presented at the IEEE Globecom Conference, 2018~\cite{globecom}}}

\maketitle

\renewcommand{\baselinestretch}{1}	
\begin{abstract}
Ambient backscatter has been introduced with a wide range of applications for low power wireless communications. In this article, we propose an optimal and low-complexity dynamic spectrum access framework for RF-powered ambient backscatter system. In this system, the secondary transmitter not only harvests energy from ambient signals (from incumbent users), but also backscatters these signals to its receiver for data transmission. Under the dynamics of the ambient signals, we first adopt the Markov decision process (MDP) framework to obtain the optimal policy for the secondary transmitter, aiming to maximize the system throughput. However, the MDP-based optimization requires complete knowledge of environment parameters, e.g., the probability of a channel to be idle and the probability of a successful packet transmission, that may not be practical to obtain. To cope with such incomplete knowledge of the environment, we develop a low-complexity online reinforcement learning algorithm that allows the secondary transmitter to ``learn" from its decisions and then attain the optimal policy. Simulation results show that the proposed learning algorithm not only efficiently deals with the dynamics of the environment, but also improves the average throughput up to 50\% and reduces the blocking probability and delay up to 80\% compared with conventional methods.
\end{abstract}

\begin{IEEEkeywords}
Ambient backscatter, RF energy harvesting, dynamic spectrum access, Markov decision process, reinforcement learning.
\end{IEEEkeywords}

\IEEEpeerreviewmaketitle

\renewcommand{\baselinestretch}{1.533}	
\section{Introduction}
\label{sec:Introduction}

Dynamic spectrum access (DSA) has been considered as a promising solution to improve the utilization of radio spectrum~\cite{DSA}. As DSA standard frameworks, the Federal Communications Commission and the European Telecommunications Standardization Institute have recently proposed Spectrum Access Systems (SAS) and Licensed Shared Access (LSA) respectively~\cite{SAS_LSA}. In both SAS and LSA, spectrum users are prioritized at different levels/tiers (e.g., there are three types of users with a decreasing order of priority: Incumbent Users (IUs), Priority Access Licensees (PALs), and General Authorized Access (GAAs)). Without loss of generality, in this work, we refer users with higher priority as IUs and users with lower priority as secondary users (SUs). DSA harvests under-utilized spectrum chunks by allowing an SU to dynamically access (temporarily) idle spectrum bands/whitespaces to transmit data.

For low-power communications users in DSA (e.g., IoT applications), recent advances in radio frequency (RF) energy harvesting allow SUs to further leverage/exploit the IUs' signals/bands even while IUs are active. Specifically, with RF-energy harvesting capability, an SU transmitter can harvest/capture energy from the incumbent signals that are transmitted from IUs, e.g., base stations and TV towers. Later, when the incumbent channel is idle, the SU can use the harvested energy to transmit its data. This mechanism is also known as the harvest-then-transmit (HTT) technique~\cite{Huynh2017Survey} that can improve both the spectrum utilization and the energy efficiency. However, the SU's system performance under HTT strongly depends on the amount of harvested energy. Intuitively, if the incumbent channel is mostly idle, the amount of harvested energy is insignificant, and thus the SU may not have sufficient energy to transmit data. In the case when the incumbent channel remains busy for a long period, the SU may not be able to use all the harvested energy to transmit data due to the transmission power regulation and the limited transmission time.

Given the above, we introduce a novel framework that employs ambient backscatter communications to further improve the spectrum utilization of RF-powered DSA systems. The ambient backscatter technology has been emerging recently as an enabler for ubiquitous communications~\cite{LiuAmbient2013}-\cite{Kellogg2014Wi-fi}. Unlike conventional backscatter communication systems, i.e., monostatic and bistatic backscatter~\cite{Huynh2017Survey} that require dedicated RF sources, in an ambient backscatter communication system, wireless devices can communicate just by reflecting RF signals from ambient RF sources, e.g., TV towers, cellular base stations, and Wi-Fi APs. Thus, this technique not only reduces deployment and maintenance costs, but also supports device-to-device communications with a very small environmental footprint. By integrating the ambient backscatter technique into an RF-powered DSA system, the secondary transmitter (ST){\footnote{The principle as well as the circuit design of the ST will be discussed in Section~\ref{sec:sysmodel}.}} can transmit data to its secondary receiver (SR) by backscattering the incumbent signals when the IU is active. Hence, the SU with ambient backscatter will have more options to transmit data with ultra-low energy consumption, further improving the spectrum utilization while causing no harmful interference to IUs \cite{LiuAmbient2013}. Note that, with the recent advances in coding and detection mechanisms~\cite{Huynh2017Survey}, the transmission range of the ambient backscatter technique can be extended up to 100 meters, making a very promising solution for the next generation of low-power wireless communications systems.

However, RF signals from IUs, e.g., TV or radar towers, are often highly dynamic and even unknown to the SUs due to RF source activities and the locations of the SUs. Furthermore, ambient backscatter and RF energy harvesting can not be efficiently performed on the same wireless device simultaneously~\cite{Zhang2014Enabling}. A critical challenge to RF-powered ambient backscatter DSA systems is how to efficiently tradeoff between backscattering RF signals (to transmit data) and harvesting energy from RF signals (to sustain the internal operation for the SU) under the dynamic of the ambient signals. In addition, the low-power SUs are intrinsically limited in computing and energy. This fact calls for efficient yet lightweight solutions.

This work aims to provide an optimal and efficient DSA solution that guides the ambient backscatter ST to whether stay idle, backscatter signals, harvest energy, or actively transmit data, to maximize its throughput
(based on its current observations, i.e., channel state, the energy level, and data buffer status). 
In particular, we first develop a Markov decision process framework together with linear programming technique to obtain the dynamic optimal policy for the ST when all environment information is given in advance. We then propose a low-complexity online learning algorithm to help the ST make the optimal decisions when the environment parameters, e.g., channel state, the successful data transmission probabilities, are not available. The simulations demonstrate that the proposed solutions always achieve the best performance compared with other existing methods. Furthermore, the proposed learning algorithm with incomplete environment parameters can closely attain the performance of the MDP optimization with complete information.

\section{Related Work And Main Contributions}
\label{sec:relatedwork}
\subsection{Related Work}

As aforementioned, the performance of a RF-powered DSA system strongly depends on the amount of harvested energy and/or battery capacity of the SUs~\cite{hasan2011}. Various works in the literature study the joint optimization of energy harvesting and data transmission to maximize the spectrum utilization, e.g., \cite{Lee2013Opportunistic}-\cite{Sakr2015Cognitive}. In particular, the authors of \cite{Derrick2016Multi} propose a non-convex multi-objective optimization problem to maximize the energy harvesting efficiency and minimize the total transmit power as well as the interference power leakage-to-transmit power ratio for a DSA. In \cite{Sakr2015Cognitive}, the authors consider device-to-device (D2D) communications in a cellular network. In this network, the D2D transmitters harvest energy from ambient RF signals and use the uplink or downlink channel to actively communicate with the corresponding receivers.

However, all current solutions for RF-powered DSA systems encounter a common limitation when the incumbent channel is mostly busy. In such a case, the throughput of the secondary system is low as the ST hardly has opportunities to access the IUs' channel to transmit. The ambient backscatter technique has recently emerged as a promising solution to address this problem. The ambient backscatter technique is particularly appropriate for implementation in RF-powered DSA systems due to the following reasons. First, ambient backscatter circuits are small with low-energy consumption~\cite{Penichet2016Do}-\cite{Liu2017Coding}, while they can share the same antenna in RF-powered wireless devices. Second, similar to RF-powered DSA systems, the ambient backscatter technique also utilizes incumbent signals as the resource to transmit data, thereby maximizing the spectrum utilization. Third, the ambient backscatter technique can transmit data without requiring decoding incumbent signals, thereby lowering the complexity of the secondary systems. However, when the ambient backscatter technique is integrated into RF-powered DSA wireless devices, how to tradeoff between the HTT and backscatter activities in order to maximize network throughput of the ST is a major challenge.

The optimal time tradeoff between the HTT and backscatter activities is investigated in \cite{Hoang2017Ambient}. In this work, the authors prove that there always exists the globally optimal time tradeoff. This implies that the integration of the ambient backscatter technique into RF-powered DSA systems always achieves the overall transmission rate higher than that of using either the ambient backscatter communication or the HTT scheme individually. In~\cite{Kim2017Hybrid}, the authors propose a hybrid backscatter communication system to improve transmission range and bitrate. Different from~\cite{Hoang2017Ambient}, this system adopts a dual/hybrid mode operation of bistatic backscatter and ambient backscatter depending on indoor and outdoor zones, respectively. Through numerical results, the authors show that the proposed hybrid communication can significantly increase the throughput and coverage of the system. Nevertheless, these solutions require complete knowledge of environment parameters to formulate the optimization problem. Alternatively, a stochastic geometry model is used in~\cite{Han2017Wirelessly} to derive the success transmission probability together with the network transmission capacity. To improve the average throughput and the coverage of backscatter networks, the sensing-pricing-transmitting policy adaptation problem for the STs is investigated in~\cite{Wang2018Stackelberg} through using a Stackeberg game model. Nonetheless, the game model does not deal with the dynamics of the environment and may be infeasible to deploy in the ST which is a power-constrained device.

All aforementioned and other related work in the literature have not accounted for the dynamics of the incumbent signals. In this work, we capture the dynamics of IUs using the MDP framework to obtain optimal DSA decisions for the ST. However, the MDP optimization requires complete environment information to derive the optimal policy that may not be practical in dynamic systems. Moreover, to deal with such incomplete information scenarios or environment uncertainties, the MDP optimization becomes more computationally demanding, especially for large-scale systems. To address these shortcomings, we propose a low-complexity online reinforcement learning algorithm.

\subsection{Contributions}
The major contributions of this paper are as follows.
\begin{itemize}

	\item We develop a dynamic optimization framework based on MDP to obtain the optimal policy for the ST in the RF-powered ambient backscatter communications system. This policy allows the ST to make optimal decisions to maximize its long-term average throughput under the dynamics of incumbent channel state, data, and energy demands. 
	
	\item To find the optimal policy for the ST, we first construct the transition probability matrix and formulate the optimization problem. Then, we use linear programming~\cite{Puterman_1994_Book} to obtain the optimal policy.
	
	\item To deal with the incomplete information and high-complexity of traditional methods, we propose a low-complexity online reinforcement learning algorithm that allows the ST to obtain the optimal policy through learning from its decisions. The proposed learning algorithm is especially important for RF-powered ambient backscatter IoT devices that have limited power and computing resources but have to deal with the dynamics of the surrounding environment.
	 
	\item Finally, we perform extensive performance evaluation with the aims of not only demonstrating the efficiency of the proposed solutions, but also providing insightful guidance on the implementation of RF-powered ambient backscatter DSA systems.
\end{itemize}
Section~\ref{sec:sysmodel} describes the system model. Section~\ref{sec:markov} presents the MDP framework together with the linear programming solution. The low-complexity online learning algorithm is developed in Section~\ref{sec:LA}. Finally, evaluation results are discussed in Section~\ref{sec:PE} and conclusions of the paper are drawn in Section~\ref{sec:conclusion}.
\section{System Model}
\label{sec:sysmodel}

\subsection{System Model}

In this work, we consider a DSA system in which the secondary system coexists with the incumbent system as illustrated in Fig.~\ref{Fig.system_model}. The secondary system consists of an ST and an SR, and the ST will opportunistically transmit data to the SR in an overlay fashion. The ST is equipped with ambient backscatter and RF energy harvesting circuits. When the incumbent transmitter transmits data to its receiver, i.e., the incumbent channel is busy, the ST can either backscatter the incumbent signals to transmit data or harvest energy from the signals to store the energy in its energy storage as shown in Fig.~\ref{Fig.system_model}(a). In contrast, when the channel is idle, the ST can use its harvested energy to actively transmit data to its SR as shown in Fig.~\ref{Fig.system_model}(b).

\begin{figure}[tbh]
	\captionsetup{singlelinecheck=off}
	\centering
	\includegraphics[scale=0.28]{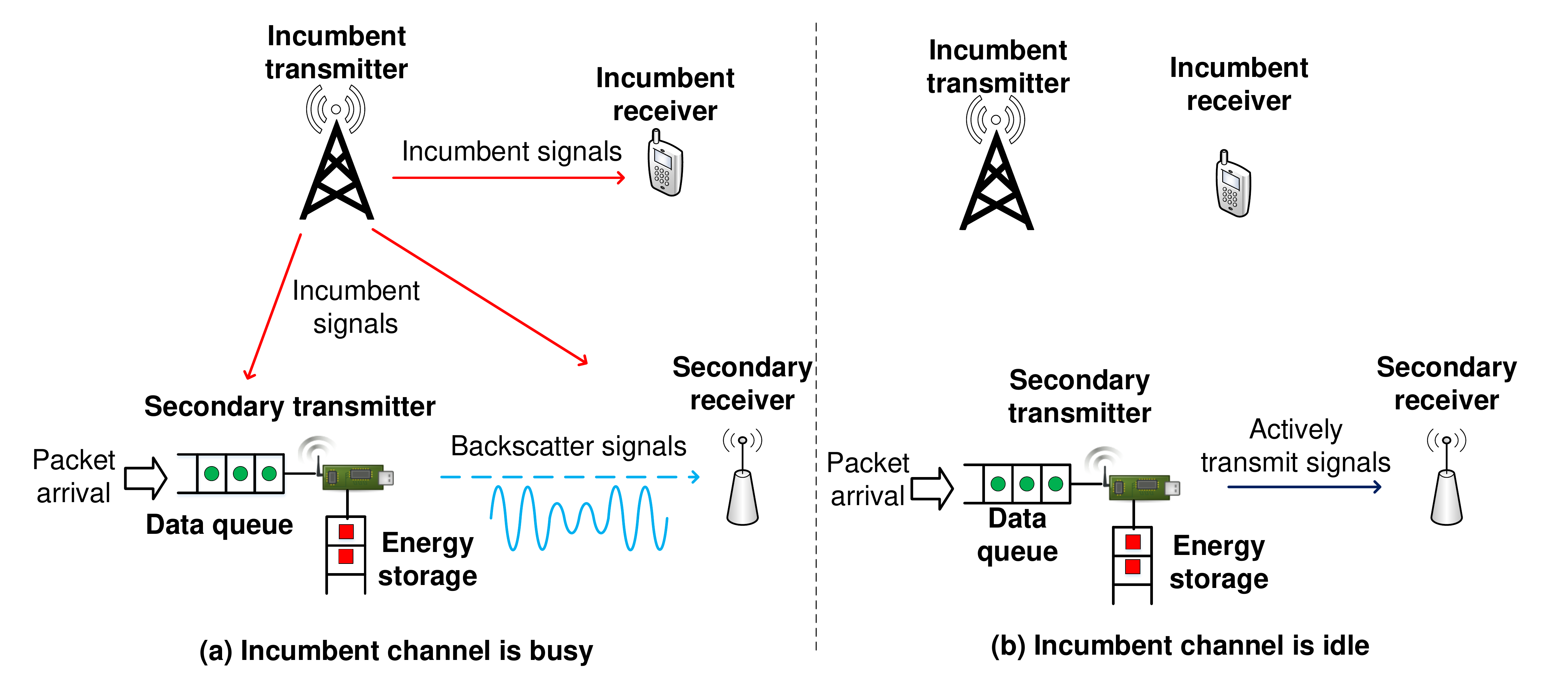}
	\caption{DSA RF-powered ambient backscatter system model.}
	\label{Fig.system_model}
\end{figure}

The maximum data queue size and the energy storage capacity are denoted by $D$ and $E$, respectively. In each time slot, the probability of a packet arriving at the data queue is denoted by $\alpha$. We denote the probability of the incumbent channel being idle by $\eta$. When the channel is busy and the ST performs backscattering to transmit data, the ST can transmit $d_\mathrm{b}$ data units successfully with probability $\beta$. This transmission is referred to as the backscatter mode. 
When the channel is busy and if the ST chooses to harvest energy, it can harvest $e_\mathrm{h}$ units of energy successfully with probability $\gamma$. When the channel becomes idle, the ST can use $e_\mathrm{t}$ units of energy to actively transmit $d_\mathrm{t}$ data units to its receiver, and $\sigma$ denotes the successful data transmission probability. This process is also known as harvest-then-transmit (HTT) mode~\cite{park2013}. Note that we consider only one ST as it is a typical setting for backscatter communications systems~\cite{Parks2014Turbocharging},~\cite{Vou2016Could},~\cite{Kimionis2012Bistatic}. Nevertheless, multiple STs can be supported through the channel selection that allows the STs to operate on different incumbent channels to avoid collision and complex signaling as in random access and TDMA, respectively. This case can be extended straightforward from the current system model.

\subsection{DSA RF-Powered Ambient Backscatter Circuit Diagram}

Fig.~\ref{Fig.circuit_diagram} shows a circuit diagram implemented at the ST and the SR in our considered RF-powered ambient backscatter DSA system. This circuit diagram has been adopted in many hardware designs in the literature \cite{Huynh2017Survey}, \cite{LiuAmbient2013}, \cite{Vou2016Could}, \cite{Kimionis2012Bistatic}. The ST consists of five main components, i.e., the controller, load modulator (for ambient backscatter process), energy harvester, active RF transmitter, a rechargeable battery, i.e., energy storage, and data buffer. The controller is responsible for controlling all the actions of the ST including making decisions and performing actions, e.g., stay idle, transmit data, harvest energy, and backscatter data. When the incumbent channel is busy, if the ST chooses to harvest energy, the ST will harvest energy from the incumbent RF signals by using the RF energy harvester and store the energy in the rechargeable battery. This energy will be used for transmitting data when the incumbent channel becomes idle through the active RF transmitter. In contrast, if the ST chooses to backscatter data, the ST will modulate the reflection of the ambient RF signals to send the data to the SR through the load modulator. To do so, the ST uses a switch which consists of a transistor connected to the antenna. The input of the ST is a stream of one and zero bits. When the input bit is zero, the transistor is off, and thus the ST is in the non-reflecting state. Otherwise, when the input bit is one, the transistor is on, and thus the ST is in the reflecting state. As such, the ST is able to transfer bits to the SR. Note that, in the backscatter mode, the ST can still harvest energy, but the amount of harvested energy is relatively small and just sufficient to supply for operations in the backscatter mode \cite{Huynh2017Survey}, \cite{LiuAmbient2013}.

\begin{figure*}[tbh]
	\centering
	\includegraphics[scale=0.22]{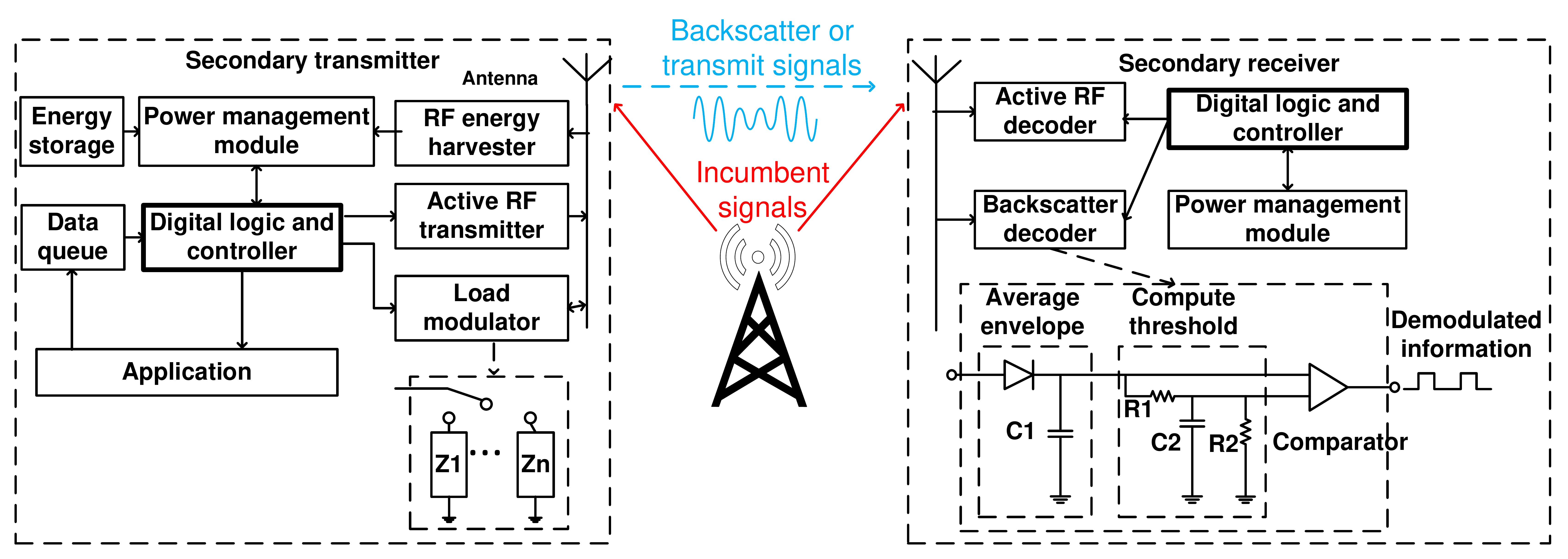}
	\caption{Circuit diagram of the DSA RF-powered ambient backscatter system.}
	\label{Fig.circuit_diagram}
\end{figure*}

The SR is equipped with the controller and power source. The controller takes responsibility for all the operations of the SR including selecting operation modes to extract the data sent from the ST. The active RF decoder is used to decode data when the ST actively transmits the data in the channel idle period. For the backscatter mode, the SR uses the backscatter decoder to extract the transmitted data. Specifically, to decode the data from the ST in the backscatter mode, the received signals are first smoothed by an envelope-averaging circuit. Then, the compute-threshold circuit is used to produce an output voltage between two levels, i.e., low and high. After that, the comparator compares the average envelope signals with a predefined threshold to generate output bits, i.e., $0$ or $1$.

\subsection{Tradeoff in DSA RF-Powered Ambient Backscatter System}
In the considered system, we consider two successive working periods of the incumbent transmitter, i.e., idle and busy. As mentioned, when the incumbent channel is busy, the ST can either backscatter signals to transmit data to the SR or harvest energy and store the harvested energy in the energy storage. When the incumbent channel is idle, the ST can actively transmit data to the SR by using the energy in the energy storage. This leads to a tradeoff problem between the HTT and backscatter process to maximize the network throughput. In particular, the ST needs to take an action, e.g., transmit data, harvest energy, or backscatter data, based on its current state, i.e., the joint channel, data queue, and energy storage states. To find the optimal policy for the ST, we adopt two methods as follows:
\begin{itemize}
	\item When the ST knows the environment parameters, the optimal policy is obtained by an optimization formulation based on the offline linear programming approach. The detail of this solution is given in Section~\ref{sec:markov}.
	\item If the environment parameters are not available in advance, we introduce an online learning algorithm to help the ST obtain the optimal policy through interaction processes with the environment. The details of the online reinforcement algorithm with low complexity are provided in Section~\ref{sec:LA}.
\end{itemize}

\section{Markov Decision Process Formulation}
\label{sec:markov}

In this section, we present the optimization problem based on the MDP framework to obtain an optimal policy for the ST. We first define the action and state spaces. Then, the transition probability matrix of the MDP is derived. Finally, the optimization formulation and performance measures are obtained.

\subsection{State Space and Action Space}
\label{state_action_spaces}

We define the state space of the ST as follows:
\begin{equation}
\begin{split}
\mathcal{S} =	\Big\{ ({\mathcal{C}}, {\mathcal{D}}, {\mathcal{E}} ): {\mathcal{C}} \in \{0,1\}; {\mathcal{D}} \in \{0,\ldots,d, \ldots, D\}; {\mathcal{E}} \in \{0,\ldots,e,\ldots, E \} \Big\},
\end{split}
\end{equation}
where $c \in \mathcal{C}$ represents the state of the incumbent channel, i.e., $c = 1$ when the incumbent channel is busy and $c = 0$ otherwise, $d \in {\mathcal{D}}$ and $e \in {\mathcal{E}}$ represent the number of data units in the data queue and the energy units in the energy storage of the ST, respectively. $D$ is the maximum data queue size, and $E$ is the maximum capacity of the energy storage. The state of the ST is then defined as a composite variable $s = (c,d,e) \in \mathcal{S}$, where $c$, $d$ and $e$ are the channel state, the data state, and the energy state, respectively. The ST can perform one of the four actions, i.e., stay idle, transmit data, harvest energy, and backscatter data. Then, the action space of the ST is defined by $\mathcal{A}	\triangleq \{a:a \in \{1,\ldots, 4\} \}$ , where
\begin{equation}
a 	=	\left\{	\begin{array}{ll}
1,	&	\mbox{when the ST stays idle},	\\
2,	&	\mbox{when the ST transmits data},	\\
3,	&	\mbox{when the ST harvests energy},	\\
4,	&	\mbox{when the ST backscatters data}.
\end{array}	\right.
\end{equation}
Additionally, the action space given states of the ST denoted by $\mathcal{A}_s$ comprises all possible actions that do not make a transition to an unreachable state. We then can express $\mathcal{A}_s$ as follows:
\begin{equation}
\mathcal{A}_s	=	\left\{	\begin{array}{ll}
\{1\},	&	\mbox{if $c=0$ and $d<d_\mathrm{t}$ OR $c=0$ and $e < e_\mathrm{t}$ OR $c=1$, $e=E$ and $d<d_\mathrm{b}$},\\
\{1,2\},	&	\mbox{if $c=0$, $d \geq d_\mathrm{t}$ and $e \geq e_\mathrm{t}$},	\\
\{3\},	&	\mbox{if $c=1$, $d < d_\mathrm{b}$ and $e<E$},	\\
\{4\},	&	\mbox{if $c=1$, $d \geq d_\mathrm{b}$ and $e=E$},	\\
\{3,4\},	&	\mbox{if $c=1$, $d \geq d_\mathrm{b}$ and $e<E$}.
\end{array}	\right.
\end{equation}

The first condition corresponds to the case when the incumbent channel is idle, and the number of data units in the data queue or the number of energy units in the energy storage is not enough for active transmission. This condition is also applied to the special case when the channel is busy, the number of data units in the data queue is not enough for backscattering, and the energy storage is full. The ST then can select only action $a = 1$, i.e., stay idle. The second condition corresponds to the case when the incumbent channel is idle and there are enough data and energy for active transmission. The third condition corresponds to the case when the incumbent channel is busy, the data in the data queue is not enough for backscattering, and the energy storage is not full. The ST, therefore, can select only action $a=3$, i.e., harvest energy. The fourth condition corresponds to the case when the incumbent channel is busy, there is enough data for backscattering, and the energy storage is full. In this case, the ST can only choose to backscatter. The fifth condition corresponds to the case when the incumbent channel is busy, there is enough data for backscattering, and the energy storage is not full.

\subsection{Transition Probability Matrix}
\label{subsec:trans}
We express the transition probability matrix given action $a \in \mathcal{A}$ as follows:

\begin{equation}
{\mathbf{P}}(a)	=	\left[	\begin{array}{cc}
\eta \mathbf{W}(a)	&	(1-\eta)\mathbf{W}(a)	\\
\eta\mathbf{W}(a)	&	(1-\eta)\mathbf{W}(a)
\end{array}	\right]
\begin{array}{l} 
\leftarrow \mbox{idle} \\
\leftarrow \mbox{busy} 
\end{array},\\
\label{eq:trans_matrix}
\end{equation} 
where $\eta$ is the probability that the incumbent channel is idle. The first row of matrix $\mathbf{P}(a)$ corresponds to the case when the incumbent channel is idle and the second row corresponds to the case when the incumbent channel is busy. The matrix ${\mathbf{W}}(a)$ represents the state transition of the ST including both the data queue and the energy storage. As mentioned in Section~\ref{state_action_spaces}, when the incumbent channel is idle, the ST will stay idle or transmit data. Otherwise, the ST will harvest energy or backscatter data. Thus, we consider 2 cases of the channel status and derive the corresponding transition probability matrices.

\subsubsection{The incumbent channel is idle}
We first derive the transition probability matrix when the incumbent channel is idle as follows:
\begin{equation}
{\mathbf{P}}(a)	=	\left[	\begin{array}{cc}
\eta \mathbf{W}(a)	&	(1-\eta)\mathbf{W}(a)	\\
0	&	0
\end{array}	\right]
\begin{array}{l} 
\leftarrow \mbox{idle} \\
\leftarrow \mbox{busy} 
\end{array}.\\
\label{eq:trans_matrix_channel_idle}
\end{equation} 

In this case, the ST can choose to stay idle, i.e., $a =1$, or transmit data by using energy in the energy storage, i.e., $a = 2$.

\paragraph{The ST stays idle} The transition probability matrix of the ST is expressed in (\ref{W1}). Note that in this paper, the empty elements in the transition probability matrices are either zeros or zero matrices with appropriate sizes.
\begin{equation}
\label{W1}
{\mathbf{W}}(1)	\!=\! \left[	\begin{array}
{c@{\hspace{0.3em}}c@{\hspace{0.3em}}c@{\hspace{0.3em}}c@{\hspace{0.3em}}c}
\mathbf{B}_{0,0}(1)& \mathbf{B}_{0,1}(1)&  & 	\\
& \mathbf{B}_{1,1}(1)& \mathbf{B}_{1,2}(1) & 	\\
& &\ddots &	\\
& &  & \mathbf{B}_{D,D}(1)
\end{array} \right]	
\begin{array}{l}\!	\leftarrow d=0 \\\!	\leftarrow d=1	\\	\vdots \\ \!	\leftarrow d=D	\end{array}		,
\end{equation}
where each row of matrix $\mathbf{W}(1)$ corresponds to the number of packets in the data queue, i.e., the queue state. The matrix ${\mathbf{B}}_{d,d'}(1)$ represents the data queue state transition from $d$ in the current time slot to $d'$ in the next time slot. Each row of the matrix ${\mathbf{B}}_{d,d'}(1)$ corresponds to the energy level of the ST. Clearly, with action $a = 1$, the energy storage will remain the same. However, the data queue can increase by one unit if there is a packet arrival. Thus, we have
\begin{equation}
{\mathbf{B}}_{d,d}(1)	\!=\! \left[	\begin{array}
{c@{\hspace{0.3em}}c@{\hspace{0.3em}}c@{\hspace{0.3em}}c@{\hspace{0.3em}}c}
(1-\alpha) & &  & 	\\
& (1-\alpha) &  & 	\\
& &\ddots &	\\
& &  & (1-\alpha)
\end{array} \right]	
\begin{array}{l}\!	\leftarrow e=0 \\\!	\leftarrow e=1	\\	\vdots \\ \!	\leftarrow e=E	\end{array},
{\mathbf{B}}_{d,d+1}(1)	\!=\! \left[	\begin{array}
{c@{\hspace{0.3em}}c@{\hspace{0.3em}}c@{\hspace{0.3em}}c@{\hspace{0.3em}}c}
\alpha & &  & 	\\
& \alpha &  & 	\\
& &\ddots &	\\
& &  & \alpha
\end{array} \right]	
\begin{array}{l}\!	\leftarrow e=0 \\\!	\leftarrow e=1	\\	\vdots \\ \!	\leftarrow e=E	\end{array}		,
\end{equation}
where $\alpha$ is the packet arrival probability. It is important to note that when the data queue is full, incoming packets will be dropped. Thus, $\mathbf{B}_{D,D}(1) = \mathbf{I}$, where $\mathbf{I}$ is an identity matrix.
\paragraph{The ST transmits data} The transition probability matrix of the ST when $a=2$ is expressed in~(\ref{trans_matrix_action2}).
\begin{equation}
\label{trans_matrix_action2}
{\mathbf{W}}(2)	\!=\! \left[	\begin{array}
{c@{\hspace{0.3em}}c@{\hspace{0.3em}}c@{\hspace{0.3em}}c@{\hspace{0.3em}}c@{\hspace{0.3em}}c}
0	\\
&	&  \ddots	\\
&	&	& & 0		\\		\hline
\mathbf{B}_{d,d-d_\mathrm{t}}(2)&\mathbf{B}_{d,d-d_\mathrm{t}+1}(2)&\mathbf{B}_{d,d}(2)&\mathbf{B}_{d,d+1}(2)	\\
\ddots & \ddots&	\ddots&	\ddots&	\\
&	&\mathbf{B}_{D,D-d_\mathrm{t}}(2)	&\mathbf{B}_{d,D-d_\mathrm{t}+1}(2)	&\mathbf{B}_{D,D}(2)&
\end{array} \right]	
\begin{array}{l}\!	\leftarrow 	d=0	\\	\vdots	\\ \!	\leftarrow	d = (d_\mathrm{t}-1)		\\ \!	\leftarrow	d= d_\mathrm{t}	\\	\vdots\\ \!	\leftarrow	d = D		\end{array}		.
\end{equation}

Again, in each time slot, the ST will transmit $d_\mathrm{t}$ data units in the data queue, i.e., $d \geq d_\mathrm{t}$. There are four cases to derive the matrix $\mathbf{B}_{d,d'}(2)$ as follows:
\begin{equation}
\label{eq:action2_4}
{\mathbf{B}}^i(2)	\!=\! \left[	\begin{array}
{c@{\hspace{0.3em}}c@{\hspace{0.3em}}c@{\hspace{0.3em}}c@{\hspace{0.3em}}c@{\hspace{0.3em}}c}
0	\\
&	&  \ddots	\\
&	&	& & 0		\\		\hline
b_t^i&&&	\\
& \ddots &	&	&	\\
&	&	&b_t^i	& \cdots &0
\end{array} \right]	
\begin{array}{l}\!	\leftarrow 	e=0	\\	\vdots	\\ \!	\leftarrow	e = e_\mathrm{t}-1		\\ \!	\leftarrow	e= e_\mathrm{t}	\\	\vdots\\ \!	\leftarrow	e = E		\end{array}		,
\end{equation}
where $i \in \{1,2,3,4\}$ corresponds to four transition probability matrices of the data queue, i.e., $\mathbf{B}_{d,d-d_\mathrm{t}}(2)$, $\mathbf{B}_{d,d-d_\mathrm{t}+1}(2)$, $\mathbf{B}_{d,d}(2)$, and $\mathbf{B}_{d,d+1}(2)$, respectively.
\begin{itemize}
	\item The first case, i.e., $\mathbf{B}_{d,d-d_\mathrm{t}}(2)$, happens when the ST successfully transmits data to its SR with the probability $\sigma$, no packet arrives, and there is enough energy in the energy storage, i.e., $e \geq e_\mathrm{t}$. Thus, the probability for this case is $b_\mathrm{t}^1=\sigma(1-\alpha)$.	
	\item The second case, i.e., $\mathbf{B}_{d,d-d_\mathrm{t}+1}(2)$, happens when the ST successfully transmits data to its SR, a packet arrives, and there is enough energy in the energy storage, i.e., $e \geq e_\mathrm{t}$. Thus, the probability for this case is $b_\mathrm{t}^2=\sigma\alpha$.	
	\item The third case, i.e., $\mathbf{B}_{d,d}(2)$, happens when the ST unsuccessfully transmits data to its SR with the probability $(1-\sigma)$, no packet arrives, and there is enough energy in the energy storage, i.e., $e \geq e_\mathrm{t}$. Thus, the probability for this case is $b_\mathrm{t}^3=(1-\sigma)(1-\alpha$).	
	\item The fourth case, i.e., $\mathbf{B}_{d,d+1}(2)$, happens when the ST unsuccessfully transmits data to its SR, a packet arrives, and there is enough energy in the energy storage, i.e., $e \geq e_\mathrm{t}$. Thus, the probability for this case is $b_\mathrm{t}^4=(1-\sigma)\alpha$.
\end{itemize}
Note that when the data queue is full, i.e., $d=D$, there is no fourth case, the calculation of $b^i_\mathrm{t}$ for the first two cases remain unchanged, while for the third case, $b^i_\mathrm{t}=(1-\sigma)(1-\alpha) +(1-\sigma)\alpha = 1-\sigma$. There is also a special case when $d_\mathrm{t}=1$. In this case, the indexes $(d-d_\mathrm{t}+1)$ and $d$ are the same. Thus, the probability for this case is $\sigma\alpha + (1-\sigma)(1-\alpha)$.
\subsubsection{The incumbent channel is busy}
When the incumbent channel is busy, the transition probability matrix is expressed as follows:
\begin{equation}
{\mathbf{P}}(a)	=	\left[	\begin{array}{cc}
0	&	0	\\
\eta\mathbf{W}(a)	&	(1-\eta)\mathbf{W}(a)
\end{array}	\right]
\begin{array}{l} 
\leftarrow \mbox{idle} \\
\leftarrow \mbox{busy} 
\end{array}.\\
\label{eq:trans_matrix_channel_busy}
\end{equation}
In this case, the ST can choose to harvest RF energy and store it in the energy storage, i.e., $a =3$ or backscatter data in the data queue to the secondary receiver, i.e., $a =4$. 
\paragraph{The ST harvests energy}
The transition probability matrix can be expressed as follows:
\begin{equation}
{\mathbf{W}}(3)	\!=\! \left[	\begin{array}
{c@{\hspace{0.3em}}c@{\hspace{0.3em}}c@{\hspace{0.3em}}c@{\hspace{0.3em}}c}
\mathbf{B}_{0,0}(3)& \mathbf{B}_{0,1}(3)&  & 	\\
& \mathbf{B}_{1,1}(3)& \mathbf{B}_{1,2}(3) & 	\\
& &\ddots &	\\
& &  & \mathbf{B}_{D,D}(3)
\end{array} \right]	
\begin{array}{l}\!	\leftarrow d=0 \\\!	\leftarrow d=1	\\	\vdots \\ \!	\leftarrow d=D	\end{array}		.
\end{equation}
When the data queue is not full, i.e., $d<D$, as the ST can only harvest energy, there are two cases for deriving the transition matrix given as follows:
\begin{equation}
\label{eq:action3_2}
{\mathbf{B}}_{d,d+1}(3)	\!=\! \left[	\begin{array}
{c@{\hspace{1.533em}}c@{\hspace{1.533em}}c@{\hspace{1.533em}}c}
b_\mathrm{a}^\diamond &b_\mathrm{a}^\circ &   	\\
& b_\mathrm{a}^\diamond &b_\mathrm{a}^\circ &   	\\
& & \ddots& 	\\
& &  & \alpha
\end{array} \right]	
\begin{array}{l}\!	\leftarrow e=0 \\\!	\leftarrow e=1	\\	\vdots \\ \!	\leftarrow e=E	\end{array},
{\mathbf{B}}_{d,d}(3)	\!=\! \left[	\begin{array}
{c@{\hspace{1.533em}}c@{\hspace{1.533em}}c@{\hspace{1.533em}}c}
b_\mathrm{a}^\ddagger &b_\mathrm{a}^\dagger &   	\\
& b_\mathrm{a}^\ddagger &b_\mathrm{a}^\dagger &   	\\
& &\ddots&	\\
& &  &  (1-\alpha)
\end{array} \right]	
\begin{array}{l}\!	\leftarrow e=0 \\\!	\leftarrow e=1	\\	\vdots \\ \!	\leftarrow e=E	\end{array}		.	
\end{equation}
\begin{itemize}
	\item There is a packet arrival, i.e., $\mathbf{B}_{d,d+1}(3)$, with probability $\alpha$.
	\begin{itemize}
		\item The ST successfully harvests RF energy with probability $\gamma$. The probability for this case is then $b_\mathrm{a}^\circ = \alpha\gamma$.
		\item The ST unsuccessfully harvests RF energy with probability $(1-\gamma)$. The probability for this case is then $b_\mathrm{a}^\diamond = \alpha(1-\gamma)$.
	\end{itemize}
	\item There is no packet arrival, i.e., $\mathbf{B}_{d,d}(3)$, with probability denoted by $(1-\alpha)$.
	\begin{itemize}
		\item The ST successfully harvests RF energy with probability $\gamma$. The probability for this case is then $b_\mathrm{a}^\dagger = (1-\alpha)\gamma$.
		\item The ST unsuccessfully harvests RF energy with probability $(1-\gamma)$. The probability for this case is then $b_\mathrm{a}^\ddagger = (1-\alpha)(1-\gamma)$.
	\end{itemize}
\end{itemize}
When the data queue is full, i.e., $d=D$, the transition matrix $\mathbf{B}_{D,D}(3)$ is expressed as follows:
\begin{equation}
\label{eq:action3_3}
{\mathbf{B}}_{D,D}(3)	\!=\! \left[	\begin{array}
{c@{\hspace{0.3em}}c@{\hspace{0.3em}}c@{\hspace{0.3em}}c@{\hspace{0.3em}}c}
(1-\gamma) &\gamma &  & 	\\
& (1-\gamma) &\gamma &  & 	\\
& & & \ddots	\\
& &  & &1
\end{array} \right]	
\begin{array}{l}\!	\leftarrow e=0 \\\!	\leftarrow e=1	\\	\vdots \\ \!	\leftarrow e=E	\end{array}		.
\end{equation}
\paragraph{The ST backscatters data} 
The transition probability matrix of the data queue is expressed as follows:
\begin{equation}
\label{trans_matrix_action4}
{\mathbf{W}}(4)	\!=\! \left[	\begin{array}
{c@{\hspace{0.3em}}c@{\hspace{0.3em}}c@{\hspace{0.3em}}c@{\hspace{0.3em}}c@{\hspace{0.3em}}c}
0	\\
&	&  \ddots	\\
&	&	& & 0		\\		\hline
\mathbf{B}_{d,d-d_\mathrm{b}}(4)&\mathbf{B}_{d,d-d_\mathrm{b}+1}(4)&\mathbf{B}_{d,d}(4)&\mathbf{B}_{d,d+1}(4)	\\
\ddots &\ddots&\ddots&	\ddots&	\\
&	&\mathbf{B}_{D,D-d_\mathrm{b}}(4)&\mathbf{B}_{D,D-d_\mathrm{b}+1}(4)&\mathbf{B}_{D,D}(2)&
\end{array} \right]	
\begin{array}{l}\!	\leftarrow 	d=0	\\	\vdots	\\ \!	\leftarrow	d = (d_\mathrm{b}-1)		\\ \!	\leftarrow	d= d_\mathrm{b}	\\	\vdots\\ \!	\leftarrow	d = D		\end{array}		.
\end{equation}

Again, at each time slot, the ST will backscatter $d_\mathrm{b}$ packets from the data queue, i.e., $d \geq d_\mathrm{b}$. This process does not require any energy from the energy storage. Therefore, the energy state in the data queue remains the same. There are four cases to derive the matrix $\mathbf{B}_{d,d'}(4)$ as follows:

\begin{equation}
\label{eq:action4_4}
{\mathbf{B}}^i(4)	\!=\! \left[	\begin{array}
{c@{\hspace{0.3em}}c@{\hspace{0.3em}}c@{\hspace{0.3em}}c@{\hspace{0.3em}}c@{\hspace{0.3em}}c}
b_\mathrm{b}^i	\\
&	b_\mathrm{b}^i & 	\\
&	& \ddots	& & 		\\
&&&b_\mathrm{b}^i	\\
\end{array} \right]	
\begin{array}{l}\!	\leftarrow 	e=0	\\ \!	\leftarrow	e = 1		\\ 	\vdots\\ \!	\leftarrow	e = E		\end{array}		,
\end{equation}
where $i \in \{1,2,3,4\}$ corresponds to four transition probability matrices of the data queue, i.e., $\mathbf{B}_{d,d-d_\mathrm{b}}(4)$, $\mathbf{B}_{d,d-d_\mathrm{b}+1}(4)$, $\mathbf{B}_{d,d}(4)$, and $\mathbf{B}_{d,d+1}(4)$, respectively.

\begin{itemize}
	\item The first case, i.e., $\mathbf{B}_{d,d-d_\mathrm{b}}(4)$, happens when the ST successfully backscatters data to its receiver and no packet arrives. Thus, the probability for this case is $b_\mathrm{b}^1=\beta(1-\alpha)$.
	\item The second case, i.e., $\mathbf{B}_{d,d-d_\mathrm{b}+1}(4)$, happens when the ST successfully backscatters data to its receiver and a packet arrives. Thus, the probability for this case is $b_\mathrm{b}^2=\beta\alpha$.
	\item The third case, i.e., $\mathbf{B}_{d,d}(4)$, happens when the ST unsuccessfully backscatters data to its receiver and no packet arrives. Thus, the probability for this case is $b_\mathrm{b}^3=(1-\beta)(1-\alpha$).
	\item The fourth case, i.e., $\mathbf{B}_{d,d+1}(4)$, happens when the ST unsuccessfully backscatters data to its receiver and a packet arrives. Thus, the probability for this case is $b_\mathrm{b}^4=(1-\beta)\alpha$.
\end{itemize}
Note that when the data queue is full, i.e., $d=D$, there is no fourth case, the calculation of $b^i_\mathrm{b}$ for the first two cases remain unchanged, while for the third case, $b^i_\mathrm{b} = (1-\beta)(1-\alpha) +(1-\beta)\alpha = 1-\beta$. There is also a special case when $d_\mathrm{b}=1$. In this case, the indexes $(d-d_\mathrm{b}+1)$ and $d$ are the same. Thus, the probability for this case is $ \beta\alpha + (1-\beta)(1-\alpha)$.
\subsection{Optimization Formulation}
We formulate an optimization problem based on the aforementioned MDP and then obtain an optimal policy, denoted by $\Omega^*$, for the ST to maximize its throughput. The policy is a mapping from a state to an action taken by the ST. In other words, given the data queue, energy level, and incumbent channel states, the policy determines an action to maximize the average reward in terms of throughput for the ST. The optimization problem is then expressed as follows:
\begin{eqnarray} 
\label{eq:average_reward}
\max_\Omega	& &	{\mathcal{R}}(\Omega)	=	\lim_{T \rightarrow \infty} \frac{1}{T} \sum_{k=1}^{T} {\mathbb{E}} \left( {\mathcal{T}_k} (\Omega) \right),	\label{eq:cmdp_obj}
\end{eqnarray}
where ${\mathcal{R}}(\Omega)$ is the average throughput of the ST under the policy $\Omega$ and ${\mathcal{T}_k} (\Omega)$ is the immediate throughput under policy $\Omega$ at time step $k$ that is defined as follows:
\begin{equation}
{\mathcal{T}}	=	\left\{	\begin{array}{ll}
\sigma d_\mathrm{t},	&	(a=2),	\\
\beta d_\mathrm{b},	&	(a=4),	\\
0	,						&	\mbox{otherwise}	.
\end{array}	\right.
\end{equation}
In Theorem~\ref{theo:limitexists}, we show that the average throughput $\mathcal{R}(\Omega)$ is well defined and does not depend on the initial state.

\begin{theorem}
	\label{theo:limitexists}
	For every $\Omega$, the average throughput $\mathcal{R}(\Omega)$ is well defined and does not depend on the initial state.
\end{theorem}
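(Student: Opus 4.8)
The plan is to exploit the fact that, once the stationary policy $\Omega$ is fixed, the controlled process collapses to an ordinary time-homogeneous Markov chain on the \emph{finite} state space $\mathcal{S}$. Writing $\mathbf{P}_\Omega$ for the stochastic matrix whose $(s,s')$ entry is $[\mathbf{P}(\Omega(s))]_{s,s'}$ and $\mathbf{r}_\Omega$ for the column vector of expected immediate throughputs $\mathbb{E}[\mathcal{T}\mid s]$ under $\Omega$, the expected reward at step $k$ started from $s_0$ is the $s_0$-th component of $\mathbf{P}_\Omega^{\,k-1}\mathbf{r}_\Omega$. Hence
\begin{equation}
\frac{1}{T}\sum_{k=1}^{T}\mathbb{E}\big(\mathcal{T}_k(\Omega)\mid s_0\big)=\Big[\Big(\tfrac{1}{T}\sum_{k=1}^{T}\mathbf{P}_\Omega^{\,k-1}\Big)\mathbf{r}_\Omega\Big]_{s_0},
\end{equation}
so the whole question reduces to the asymptotic behaviour of the Cesàro averages of the powers of $\mathbf{P}_\Omega$.

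First I would establish well-definedness. For any finite stochastic matrix the Cesàro limit $\mathbf{P}_\Omega^{*}=\lim_{T\to\infty}\tfrac{1}{T}\sum_{k=1}^{T}\mathbf{P}_\Omega^{\,k-1}$ always exists; this is a standard fact (e.g.\ via the Jordan/Perron--Frobenius decomposition, in which the eigenvalue $1$ contributes the limit while every other eigenvalue on the unit circle is averaged away). Consequently $\mathcal{R}(\Omega)=[\mathbf{P}_\Omega^{*}\mathbf{r}_\Omega]_{s_0}$ exists for every initial state $s_0$, which settles the first claim.

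Next I would prove the limit is independent of $s_0$. It suffices to show the induced chain is \emph{unichain}, i.e.\ it possesses a single recurrent communicating class (plus possibly transient states); then the rows of $\mathbf{P}_\Omega^{*}$ are all equal to the unique stationary distribution $\pi_\Omega$, giving $\mathcal{R}(\Omega)=\pi_\Omega\mathbf{r}_\Omega$ irrespective of $s_0$. To verify the unichain property I would use two structural features of the model. (i) The channel component is i.i.d.: both rows of $\mathbf{P}(a)$ in (\ref{eq:trans_matrix}) carry the same block, so from any state the next channel is idle w.p.\ $\eta$ and busy w.p.\ $1-\eta$, both strictly positive. (ii) The data and energy components can be driven to a common reference level regardless of the action forced by $\mathcal{A}_s$: by chaining busy slots (to harvest energy up to $E$, or to backscatter the queue down) and idle slots (to transmit the queue down and spend stored energy), one can exhibit, from every state, a finite positive-probability path into a fixed ``anchor'' state and back. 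This makes all states communicate, so the chain has exactly one recurrent class.

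The main obstacle is step (ii): because a deterministic $\Omega$ forces a single action in each state, reachability must be argued within the \emph{policy-restricted} dynamics rather than over the full action set, and one must rule out that the forced actions partition $\mathcal{S}$ into two mutually unreachable recurrent blocks. The cleanest route is to show that a suitable anchor state (e.g.\ an empty-queue, reference-energy configuration) lies in the reachable support from every state under any admissible $\Omega$, using the positivity of $\alpha,\beta,\gamma,\sigma,\eta$ and $1-\eta$ together with the admissibility structure of $\mathcal{A}_s$. Once the unichain property is in hand, uniqueness of $\pi_\Omega$ and the identity $\mathcal{R}(\Omega)=\pi_\Omega\mathbf{r}_\Omega$ finish the proof.
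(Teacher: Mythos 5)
Your decomposition is genuinely different from the paper's and, in one respect, sounder. The paper argues directly that the induced chain is irreducible for every $\Omega$ (indeed it asserts $p_{s,s'}>0$ for all pairs of states, which overclaims even in spirit: under a policy that never selects the transmit action, the energy level is non-decreasing along every trajectory, so states of lower energy are unreachable from states with $e=E$ and the chain is \emph{not} irreducible, only unichain). Your split --- well-definedness from the existence of the Ces\`aro limit of the powers of a finite stochastic matrix, and initial-state independence from the unichain property --- is the right level of generality precisely because it tolerates transient states that the paper's irreducibility claim cannot.

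That said, the proof is not complete. Step (ii), the unichain property under an arbitrary admissible $\Omega$, is the entire mathematical content of the theorem, and you leave it as a declared ``obstacle'' with a proposed route rather than an argument: no anchor state is exhibited, and no positive-probability path into it is constructed within the policy-restricted dynamics. The difficulty you flag is real and can only be resolved by an explicit construction. For example, you would need to verify that a candidate anchor such as $(1,D,E)$ is reachable from every state under every policy consistent with $\mathcal{A}_s$, using that (a) harvesting is \emph{forced} whenever the channel is busy, $d<d_\mathrm{b}$ and $e<E$, and backscattering (available whenever harvesting is not forced and $e<E$) drives $d$ below $d_\mathrm{b}$ with positive probability, so the energy level can always be pushed to $E$ regardless of the policy's choices; and (b) once $e=E$ the queue can be driven to $D$ by chaining arrivals with failed transmissions or backscatters, each event having positive probability since $\alpha,\beta,\sigma,\eta,1-\eta\in(0,1)$. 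Until that case analysis is written out, neither ``all states communicate'' (which, as noted, is false for some policies) nor the weaker single-recurrent-class claim is supported, and the theorem is not proved.
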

\begin{proof}
	To prove this theorem, we first point out that the Markov chain is irreducible. It means that we need to prove that $p_{s,s'}>0, \forall s, s' \in \mathcal{S}$, i.e., the process can go from any state to any state. We will consider two cases, i.e., the incumbent channel is busy and idle, and prove that in each case, the transition probabilities will be always greater than 0. Clearly, from any state when the incumbent channel is busy (or idle), the process can move to any state when the incumbent channel is idle (or busy). Intuitively, as the probability of the incumbent channel being idle is $\eta$, from state $s=(0,d,e), \forall d \in \mathcal{D}~\mbox{and}~\forall e \in \mathcal{E}$, we can move to state $s'=(1,d,e)$ with probability $(1-\eta) > 0$. In contrast, the process will move from state $s'= (1,d,e)$ to state $s=(0,d,e)$ with probability $\eta > 0$.

	Consider the case when the incumbent channel is idle. Given state $s=(0,d,e), \forall d \in \mathcal{D}~\mbox{and}~\forall e \in \mathcal{E}$, if the ST chooses to stay idle, the system will move to state $s'=(0,d+1,e)$ with probability $\alpha$ if there is a packet arrival, and remain unchanged with probability $(1-\alpha) > 0$ if there is no packet arrival. If the ST chooses to transmit data, the system will move to the next state under the four cases with different probabilities $b^1_{\mathrm{t}} > 0 $, $b^2_{\mathrm{t}} > 0$, $b^3_{\mathrm{t}} > 0 $, and $b^4_{\mathrm{t}} > 0$ as discussed in Section~\ref{subsec:trans}. Note that the energy storage needs to have enough energy, i.e., $e \geq e_{\mathrm{t}}$, to support the transmission. Through the energy harvesting process, the system always can move to a state in which there is enough energy in the energy storage as discussed in the following.
	
	Consider the case when the incumbent channel is busy. Given state $s=(1,d,e), \forall d \in \mathcal{D}~\mbox{and}~\forall e \in \mathcal{E}$, if the ST chooses to harvest energy from the incumbent signals and there is a packet arrival, the system will move to state $s'=(1,d+1,e+1)$, i.e., successfully harvests RF energy, and $s'=(1,d+1,e)$, i.e., unsuccessfully harvests RF energy, with probabilities $b_\mathrm{a}^\circ>0$ and $b_\mathrm{a}^\diamond>0$, respectively. If there is no packet arrival, the system will move to state $s'=(1,d,e+1)$ and remain unchanged with probabilities $b_\mathrm{a}^\dagger>0$ and $b_\mathrm{a}^\ddagger>0$, respectively. For the case when the ST chooses to backscatters data, the system will move to the next state under the four cases with different probabilities $b^1_{\mathrm{b}} > 0 $, $b^2_{\mathrm{b}} > 0 $, $b^3_{\mathrm{b}} > 0 $, and $b^4_{\mathrm{b}} > 0 $ as stated in Section~\ref{subsec:trans}.
	
	Thus, the state space $\mathcal{S}$ contains only one communicating class, i.e., $p_{s,s'}>0, \forall s, s'$. In other words, the MDP with states in $\mathcal{S}$ is irreducible. As a result, the average throughput $\mathcal{R}(\Omega)$ is well defined and does not depend on the initial state~\cite{Puterman_1994_Book},~\cite{CompetitiveBook}. 
\end{proof}

Then, we obtain the optimal policy from the optimization problem by formulating and solving a linear programming (LP) problem~\cite{Puterman_1994_Book}. The LP problem is expressed as follows:
\begin{eqnarray}
\max_{\psi(s,a)}	& &	\sum_{s \in \mathcal{S}} \sum_{a \in \mathcal{A}}	\psi(s,a)	{\mathcal{T}} ( s, a )	\label{eq:obj_lp}	\\
\mbox{s.t.}			& & \sum_{a \in \mathcal{A}}	\psi(s',a)	=	 \sum_{s \in \mathcal{S}} \sum_{a \in \mathcal{A}}	\psi(s,a)	p_{s, s'} (a), \quad	\forall s' \in \mathcal{S}	\nonumber	\\
& & \sum_{s \in \mathcal{S}} \sum_{a \in \mathcal{A}}	\psi(s,a)	=1, \quad	\psi(s,a)	\geq 0	\nonumber	,
\end{eqnarray}
where $p_{s, s'} (a)$ denotes the element of matrix ${\mathbf{P}}(a)$. Let the solution of the LP problem be denoted by $\psi^*(s,a)$. Then, the policy of the ST obtained from the optimization problem is expressed as follows~\cite{Puterman_1994_Book}:
\begin{equation}
\Omega^*(s,a)	=	\frac{ \psi^*(s,a) }{ \sum_{ a' \in \mathcal{A} } \psi^*(s,a') }, \quad \forall s \in \mathcal{S} .
\label{eq:optimal_policy}
\end{equation}
\subsection{Performance Metrics}
Given that the optimization problem is feasible, we can obtain the optimal policy for the ST. The following performance measures then can be derived.

\paragraph{Average number of packets in the data queue is obtained from}
\begin{equation}
\overline{d}		=		\sum_{	a \in \mathcal{A}	}	\sum_{	c \in \mathcal{C}	} \sum_{d=0}^D	\sum_{e=0}^E	d	\psi^*( (c,d,e) ,a)	.
\end{equation}

\paragraph{Average throughput is obtained from}
\begin{equation}
\tau	= \sum_{d=d_\mathrm{t}}^D	\sum_{e=e_\mathrm{t}}^E	\sigma d_\mathrm{t}	\psi^*( (c,d,e) ,2)		+ \sum_{d=d_\mathrm{b}}^D	\sum_{e=0}^E	\beta d_\mathrm{b}	\psi^*( (c,d,e) ,4).
\end{equation}

\paragraph{Average delay can be obtained using Little's law as follows}
\begin{equation}
\overline{\kappa}	=	\frac{\overline{d}}	{ \tau }, 
\end{equation}
where $\tau$ is the effective arrival rate which is the same as the throughput.	

\section{Proposed Low-Complexity Online Reinforcement Learning Algorithm}
\label{sec:LA}

The aforementioned MDP introduced in Section~\ref{sec:markov} requires environment parameters, e.g., channel idle and packet arrival probabilities, to construct transition probability matrices. Nevertheless, these environment parameters may not be available for formulating the optimization problem in practice. We thus propose a low-complexity reinforcement learning algorithm to obtain the optimal policy for the ST in an online fashion without requiring the environment parameters in advance. In particular, we implement the online reinforcement learning algorithm on the ST that directly guides the controller to take actions as shown in Fig.~\ref{Fig.learning}. Given the current state and its policy, the learning algorithm will make an optimal decision and send to the controller to perform the action. After that, the learning algorithm observes the results and updates its current policy. In this way, the learning algorithm can improve its policy, and we will show that our proposed learning algorithm can converge to the optimal policy. 
\begin{figure}[tbh]
	\centering
	\includegraphics[scale=0.3]{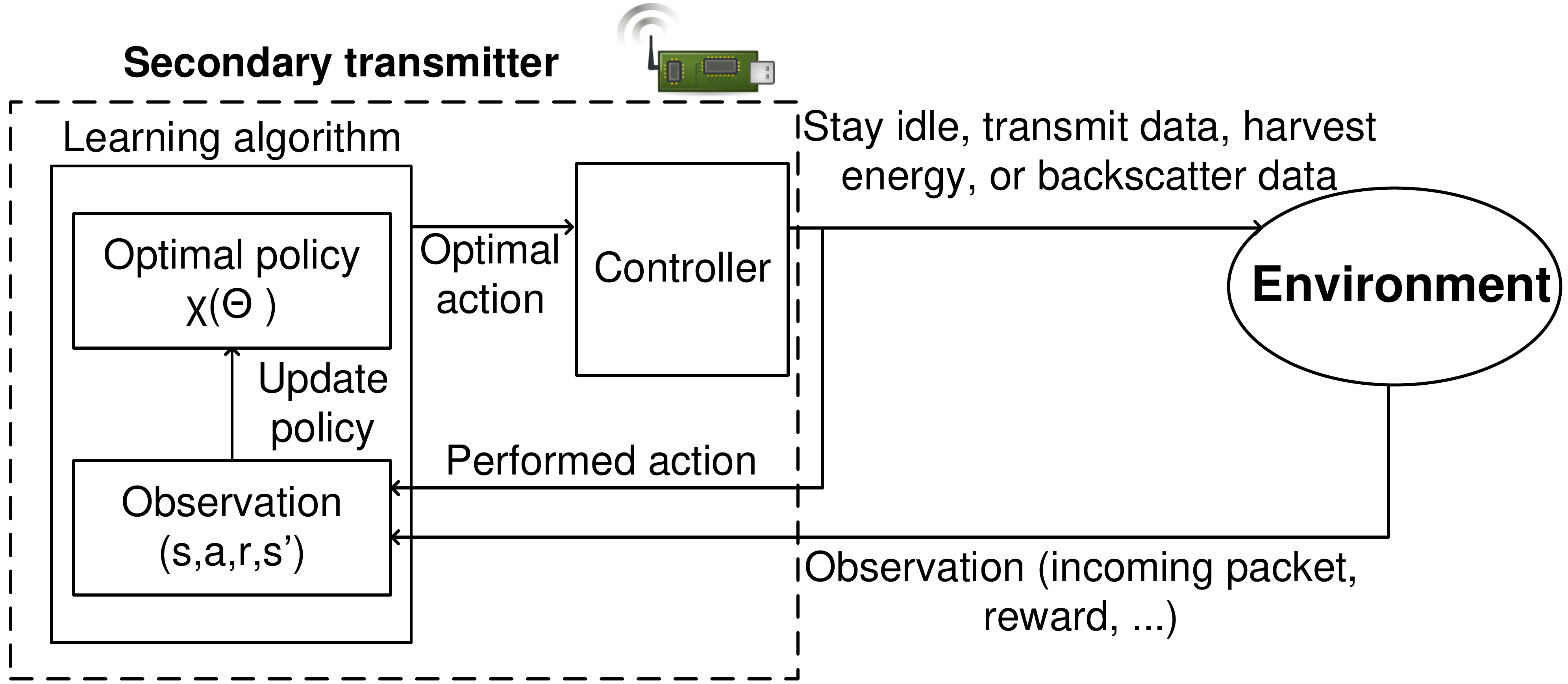}
	\caption{The learning model.}
	\label{Fig.learning}
\end{figure}

\subsection{Parameterization for the MDP}
We denote $\Theta = \{\theta_{ s,a } \in \mathbb{R}\}$ as a parameter vector of the ST at state $s$ with the current action $a$. We consider a randomized parameterized policy~\cite{Marbach2001, Baxter2001} to find decisions for the ST. Under the randomized parameterized policy, when the ST is at state $s$, it will choose action $a$ with the probability $\chi_{\Theta}(s,a)$ which is normalized as follows:
\begin{equation}
\chi_{\Theta}(s,a) = \frac{\exp\big(\theta_{ s,a }\big)} {\sum_{a' \in \mathcal{A}}\exp\big(\theta_{ s,a'  }\big)},
\label{eq:randomized_parameterized_function}
\end{equation}
where $\Theta = \left[	\begin{array}{ccc}	\cdots	&	\theta_{s, a}	&	\cdots	\end{array}	\right]^\top$ is used to support the ST to make decisions given its current state. By using the results obtained from interacting with the environment, this parameter vector will be updated iteratively. In addition, the parameterized randomized policy $\chi_{\Theta}(s,a)$ must not be negative and satisfies the following constraint:
\begin{equation}
\label{eq:condition_1}
\sum_{a \in \mathcal{A}} \chi_{\Theta}(s,a) = 1	. 
\end{equation} 

Based on the parameterized randomized policy, the immediate throughput function of the secondary user is then parameterized as follows:
\begin{equation}
\label{eq:average_throguhput_defination}
{\mathcal{T}}_{\Theta} ( s )	= \sum_{a \in \mathcal{A}} \chi_{\Theta}(s, a){\mathcal{T}}( s, a ) ,
\end{equation}	
where $\mathcal{T}( s, a )$ is the immediate throughput when the ST chooses action $a$ given state $s$. Similarly, the parameterized transition probability function given the randomized parameterized policy $\chi_{\Theta}(s,a)$ can also be derived as follows:
\begin{equation}
p_{\Theta}({s,s'	})= \sum_{a \in \mathcal{A}} \chi_{\Theta}(s, a)	p_{s,s'}(  a ), \quad \forall s, s' \in \mathcal{S}, 
\end{equation}
where $p_{s,s'}( a )$ is the transition probability from state $s$ to state $s'$ when action $a$ is taken.

After that, the average throughput of the ST can be parameterized as follows:
\begin{equation}
\label{eq:throughput}
\xi(\Theta) = \lim_{t\rightarrow \infty} \frac {1}{t} \mathbb{E}_{\Theta} \Big[ \sum_{k=0}^{t} {\mathcal{T}}_{\Theta} ( s_k )\Big] ,
\end{equation}
where $s_k$ is the state of the ST at time step $k$. $\mathbb{E}_{\Theta}[ \cdot ]$ is the expectation of the throughput. Then, we derive the following proposition~\cite{Marbach2001}:
\begin{proposition}
	\label{recurrent_state}
	The Markov chain corresponding to every $\Theta$ is aperiodic. Additionally, there exists a state $s^{\dagger}$ that is recurrent for the Markov chain.
\end{proposition}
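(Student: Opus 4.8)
The plan is to obtain both assertions from the irreducibility of the underlying MDP already established in Theorem~\ref{theo:limitexists}, after transferring that property to the Markov chain induced by the randomized parameterized policy $\chi_\Theta$. The key observation is that the softmax form in (\ref{eq:randomized_parameterized_function}) forces $\chi_\Theta(s,a) > 0$ for every state $s$ and every feasible action $a \in \mathcal{A}_s$, since $\exp(\theta_{s,a}) > 0$ for any real $\theta_{s,a}$. Hence the parameterized one-step probability $p_\Theta(s,s') = \sum_{a \in \mathcal{A}} \chi_\Theta(s,a)\,p_{s,s'}(a)$ is strictly positive as soon as a single feasible action $a$ satisfies $p_{s,s'}(a) > 0$. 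The proof of Theorem~\ref{theo:limitexists} exhibits, for every ordered pair $s, s' \in \mathcal{S}$, a finite sequence of one-step moves each realizable with positive probability under some feasible action; since every such action retains positive weight in the mixture, each move survives and the chain associated with $\chi_\Theta$ is irreducible for every $\Theta$.

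Given irreducibility, the recurrence claim is immediate: the state space $\mathcal{S}$ is finite (its size is bounded through $D$ and $E$), and a finite irreducible Markov chain is positive recurrent, so every state is recurrent; in particular any chosen state can serve as the required $s^{\dagger}$. For aperiodicity I would exhibit a self-loop. Take a full-queue idle state $s = (0,D,e)$; the action $a = 1$ (stay idle) is feasible there because action $1$ belongs to $\mathcal{A}_s$ whenever $c = 0$, and under it the block $\mathbf{B}_{D,D}(1) = \mathbf{I}$ leaves $(d,e)$ unchanged while the channel remains idle with probability $\eta$. Thus $p_\Theta(s,s) \geq \chi_\Theta(s,1)\,\eta > 0$, so $s$ has period $1$; since all states of an irreducible chain share a common period, the chain is aperiodic.

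I expect no serious obstacle here, as the argument reduces to reusing the reachability structure of Theorem~\ref{theo:limitexists} together with a single self-loop. The only point deserving care is the bookkeeping that the parameterized policy is consistent with the feasible action set $\mathcal{A}_s$, so that the positivity of $\chi_\Theta$ on the actions invoked in the irreducibility argument and on the self-loop action is legitimate; this is precisely where the restriction to $\mathcal{A}_s$ and the strict positivity of the exponential weights are used.
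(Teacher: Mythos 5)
Your proposal is correct and follows essentially the same route as the paper: positivity of the softmax weights transfers the irreducibility established in Theorem~\ref{theo:limitexists} to the chain induced by $\chi_\Theta$, and a positive-probability self-loop yields aperiodicity. The only cosmetic difference is that the paper asserts a self-loop $p_\Theta(s,s)>0$ at \emph{every} state to make each state aperiodic directly, whereas you exhibit one self-loop and invoke the common period of an irreducible chain --- an equally valid (and slightly more economical) finish.
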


\begin{proof}
	Given state $s$ and parameter vector $\Theta$, the system will remain unchanged with the one-step probability as follows:
	\begin{equation}
		p_{\Theta}({s,s	})= \sum_{a \in \mathcal{A}} \chi_{\Theta}(s, a)	p_{s,s}(a)= \sum_{a \in \mathcal{A}}\frac{\exp(\theta_{ s,a})}{\sum_{ a' \in \mathcal{A} }\exp(\theta_{ s,a'  })}p_{s,s}(a), \forall s \in \mathcal{S}.
	\end{equation}
	Clearly, as mentioned in the proof of Theorem~\ref{theo:limitexists} and Section~\ref{subsec:trans}, $p_{\Theta}(s,s) > 0, \forall s \in \mathcal{S}$. As a result, the period of state $s \in \mathcal{S}$, i.e., the greatest common divisor of all $n$-step transitions of state $s$, is equal to $1$, thereby state $s$ is aperiodic. As every state $s \in \mathcal{S}$ is aperiodic, the Markov chain corresponding to $\Theta$ is also aperiodic.
	
	Similar to the the proof of Theorem~\ref{theo:limitexists}, we can show that the Markov chain corresponding to $\Theta$ is irreducible. Therefore, there always exists a recurrent state $s^\dagger$ for the Markov chain corresponds to every $\Theta$.
\end{proof}

Proposition~\ref{recurrent_state} implies that the average throughput $\xi(\Theta)$ is well defined for every $\Theta$. More importantly, $\xi(\Theta)$ does not depend on the initial state $s_0$. Additionally, we have the following balance equations:
\begin{equation}
\label{eq: balance equation}
\sum_{s \in \mathcal{S}} \pi_{\Theta}({s})=1 \quad \mbox{and} \quad \sum_{  s \in \mathcal{S} } \pi_{\Theta}({s}) p_{\Theta}({s,s'}) = \pi_{\Theta}({s'}), \quad \forall s' \in \mathcal{S},
\end{equation}
where $\pi_{\Theta}({s})$ is the steady-state probability of state $s$ under the parameter vector $\Theta$. The balance equations (\ref{eq: balance equation}) have a solution denoted by a vector $\Pi_{\Theta} = \left[	\begin{array}{ccc}	\cdots	&	\pi_{\Theta}({s})	&	\cdots	\end{array}	\right]^\top$~\cite{Marbach2001}. From (\ref{eq:throughput}) and (\ref{eq: balance equation}), we can derive the parameterized average throughput as follows:
\begin{equation}
\label{eq:average throughput}
\xi(\Theta) = \sum_{s	\in		\mathcal{S}} \pi_{\Theta}({s}) {\mathcal{T}}_{\Theta} ( s )	.
\end{equation}
The objective of the optimal policy is to find the optimal value of $\Theta$ to maximize the average throughput $\xi(\Theta)$ of the ST.

\subsection{Policy Gradient Method}
To obtain the gradient of the average throughput $\xi(\Theta)$, we define the differential throughput $d(s,\Theta)$ at state $s$. Note that this differential throughput is used to show the relation between the immediate throughput and the average throughput of the ST at state $s$ instead of the recurrent state $s^\dagger$. Then, the differential throughput $d(s,\Theta)$ is expressed as follows:
\begin{equation}
d(s, \Theta) = \mathbb{E}_{\Theta} \left[ \sum_{k=0}^{T-1} \left( 	{\mathcal{T}}_{\Theta} ( s_k ) - \xi(\Theta)	\right) | s_{0} = s \right],
\end{equation}
where $T=\min\{k>0 | s_k = s^\dagger\}$ is the first next time that the system revisits the recurrent state $s^\dagger$. Under Proposition~\ref{recurrent_state}, the differential throughput $d(s,\Theta)$ is a unique solution of the following Bellman equation: 
\begin{equation}
d(s, \Theta)={\mathcal{T}}_{\Theta}( s )-\xi(\Theta)+\sum_{s' \in \mathcal{S}} p_{\Theta}({s,s'}) d(s', \Theta), \quad \forall s \in \mathcal{S}.
\end{equation}

We then make the following proposition:

\begin{proposition}
	\label{derivatives}
	For any two states $s, s' \in \mathcal{S}$, the immediate throughput function ${\mathcal{T}}_{\Theta} (s)$ and the transition probability function $p_{\Theta}({s,s'})$ satisfy the following conditions: (1) twice differentiable and (2) the first and second derivatives with respect to $\theta$ are bounded.
\end{proposition}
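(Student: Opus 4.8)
The plan is to reduce everything to the smoothness and boundedness of the softmax weights $\chi_{\Theta}(s,a)$ defined in~(\ref{eq:randomized_parameterized_function}), and then exploit the fact that both ${\mathcal{T}}_{\Theta}(s)$ and $p_{\Theta}(s,s')$ are \emph{finite linear combinations} of these weights with coefficients that do not depend on $\Theta$. Indeed, ${\mathcal{T}}_{\Theta}(s) = \sum_{a \in \mathcal{A}} \chi_{\Theta}(s,a){\mathcal{T}}(s,a)$ and $p_{\Theta}(s,s') = \sum_{a \in \mathcal{A}} \chi_{\Theta}(s,a) p_{s,s'}(a)$, where the coefficients ${\mathcal{T}}(s,a) \in \{0, \sigma d_\mathrm{t}, \beta d_\mathrm{b}\}$ and $p_{s,s'}(a) \in [0,1]$ are constants bounded in absolute value by some $M < \infty$ (finite because $\mathcal{A}$ and $\mathcal{S}$ are finite and the immediate throughput is finite). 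Since differentiation is linear and the sum over $\mathcal{A}$ is finite, it suffices to prove that each weight $\chi_{\Theta}(s,a)$ is twice differentiable in $\Theta$ with first and second partial derivatives bounded uniformly in $\Theta$.

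For \textbf{twice-differentiability}, I would observe that $\chi_{\Theta}(s,a)$ is a ratio $\exp(\theta_{s,a}) / \sum_{a'}\exp(\theta_{s,a'})$ whose numerator and denominator are compositions of the infinitely differentiable exponential with the linear coordinate maps $\Theta \mapsto \theta_{s,a'}$, and whose denominator is strictly positive for every $\Theta$. A quotient of $C^{\infty}$ functions with nonvanishing denominator is $C^{\infty}$, hence $\chi_{\Theta}(s,a)$ is in particular twice differentiable; by linearity so are ${\mathcal{T}}_{\Theta}(s)$ and $p_{\Theta}(s,s')$.

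For \textbf{boundedness}, the key structural fact is that the partial derivatives of the softmax can be written entirely in terms of the softmax values themselves. Differentiating with respect to a component $\theta_{s,b}$ at the \emph{same} state $s$ gives the standard identity $\partial \chi_{\Theta}(s,a)/\partial \theta_{s,b} = \chi_{\Theta}(s,a)\big(\mathbbm{1}_{\{a=b\}} - \chi_{\Theta}(s,b)\big)$, while derivatives with respect to parameters $\theta_{s'',\cdot}$ at any other state $s'' \neq s$ vanish, since $\chi_{\Theta}(s,\cdot)$ depends only on $\{\theta_{s,a'}\}_{a'}$. Because every $\chi_{\Theta}(s,a) \in [0,1]$ for all $\Theta$, this first derivative is bounded in absolute value by $1$. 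Differentiating once more yields the second-order partials as polynomials in the quantities $\chi_{\Theta}(s,\cdot)$, e.g. products of the form $\chi_{\Theta}(s,a)(\mathbbm{1}-\chi_{\Theta}(s,b))(\mathbbm{1}-2\chi_{\Theta}(s,c))$, which are again bounded by an absolute constant since each factor lies in $[-1,1]$. Multiplying by the bounded coefficients and summing over the finite action set preserves both bounds, establishing the claim for ${\mathcal{T}}_{\Theta}(s)$ and $p_{\Theta}(s,s')$.

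The point I expect to need the most care, though it is not deep, is the \emph{uniformity} of the bounds over the \emph{unbounded} parameter domain $\Theta \in \mathbb{R}^{|\mathcal{S}||\mathcal{A}|}$. This is precisely where the softmax parameterization pays off: no matter how large the entries of $\Theta$ grow, each $\chi_{\Theta}(s,a)$ remains confined to the compact interval $[0,1]$, so the derivative expressions, being polynomials in these confined quantities, admit bounds independent of $\Theta$. Consequently the first and second derivatives of ${\mathcal{T}}_{\Theta}(s)$ and $p_{\Theta}(s,s')$ are uniformly bounded, as required.
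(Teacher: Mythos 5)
Your proof is correct, and it reaches the same destination as the paper's by what is at heart the same mechanism (differentiate the softmax, observe that everything is controlled by ratios lying in $[0,1]$), but the route is organized differently and is in one respect cleaner. The paper works concretely: it writes ${\mathcal{T}}_{\Theta}(s)$ as an explicit quotient $\bigl(\exp(\theta_{s,2}){\mathcal{T}}(s,2)+\exp(\theta_{s,4}){\mathcal{T}}(s,4)\bigr)/\mathcal{Q}$, applies the quotient rule twice, and checks term by term that quantities such as $\exp(\theta_{s,2})/\mathcal{Q}$ and $\exp(\theta_{s,2})\exp(\theta_{s,4})/\mathcal{Q}^2$ are bounded; it then asserts the same for $p_{\Theta}(s,s')$ by analogy. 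You instead factor the argument through a single lemma about the weights: both functions are finite linear combinations of $\chi_{\Theta}(s,a)$ with $\Theta$-independent bounded coefficients, so it suffices to bound the first and second partials of the softmax itself, which you do via the standard identity $\partial\chi_{\Theta}(s,a)/\partial\theta_{s,b}=\chi_{\Theta}(s,a)\bigl(\mathbbm{1}_{\{a=b\}}-\chi_{\Theta}(s,b)\bigr)$ and its differentiation. This buys you two things. First, the argument handles ${\mathcal{T}}_{\Theta}$ and $p_{\Theta}$ (and any other linearly parameterized quantity) simultaneously, whereas the paper's explicit computation covers only the throughput and waves at the transition probabilities. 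Second, and more substantively, your bounds are manifestly uniform over all of $\mathbb{R}^{|\mathcal{S}||\mathcal{A}|}$ because each $\chi_{\Theta}(s,a)\in[0,1]$ regardless of how large the parameters are; the paper instead leans on the remark that ``$\theta$ is limited (see the proof of Theorem~\ref{prop2}),'' an appeal that is both unnecessary and mildly circular, since the convergence analysis in that proof itself presupposes the smoothness and boundedness established here. Your version removes that dependency, which is the right way to state the result.
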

The proof of Proposition~\ref{derivatives} is provided in Appendix~\ref{appendix:pro_derivatives}. In particular, Proposition~\ref{derivatives} ensures that the immediate reward and the transition probability functions depend ``smoothly'' on $\theta$. With the differential throughput $d(s,\Theta)$, the gradient of the average throughput $\xi(\Theta)$ can be easily derived as stated in Theorem~\ref{prop_policy_gradient}.

\begin{theorem}
	\label{prop_policy_gradient}
	Under Proposition~\ref{recurrent_state} and Proposition~\ref{derivatives}, we have
	\begin{equation}
	\nabla \xi(\Theta) = \sum_{s \in \mathcal{S}} \pi_{\Theta}(s) \Big(\nabla {\mathcal{T}}_{\Theta} ( s ) + \sum_{s' \in \mathcal{S}} \nabla p_{\Theta}(s,s') d(s',\Theta)  \Big) .
	\end{equation}
\end{theorem}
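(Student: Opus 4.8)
The plan is to begin from the stationary representation of the average throughput, $\xi(\Theta) = \sum_{s \in \mathcal{S}} \pi_{\Theta}(s)\, \mathcal{T}_{\Theta}(s)$ in \eqref{eq:average throughput}, and to differentiate it with respect to $\Theta$. Because the state space $\mathcal{S}$ is finite and, by Proposition~\ref{derivatives}, both $\mathcal{T}_{\Theta}(s)$ and $p_{\Theta}(s,s')$ depend smoothly on $\Theta$, the gradient commutes with the finite sum and the product rule applies, giving $\nabla \xi(\Theta) = \sum_{s} \pi_{\Theta}(s)\,\nabla \mathcal{T}_{\Theta}(s) + \sum_{s} \nabla \pi_{\Theta}(s)\, \mathcal{T}_{\Theta}(s)$. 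The first term is already exactly the $\nabla \mathcal{T}_{\Theta}(s)$ part of the claim, so the whole task reduces to showing that the residual term $\sum_{s} \nabla \pi_{\Theta}(s)\, \mathcal{T}_{\Theta}(s)$ equals $\sum_{s} \pi_{\Theta}(s) \sum_{s'} \nabla p_{\Theta}(s,s')\, d(s',\Theta)$. The main obstacle is that $\nabla \pi_{\Theta}(s)$ has no usable closed form, so the entire proof is an exercise in eliminating it.

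To remove $\nabla \pi_{\Theta}$ I would perform two substitutions. First, solving the Bellman equation for the differential throughput $d(s,\Theta)$ for the term $\mathcal{T}_{\Theta}(s)$ gives $\mathcal{T}_{\Theta}(s) = d(s,\Theta) + \xi(\Theta) - \sum_{s'} p_{\Theta}(s,s')\, d(s',\Theta)$. Substituting this into the residual term splits it into three pieces. The piece proportional to $\xi(\Theta)$ carries the factor $\sum_{s} \nabla \pi_{\Theta}(s)$, which vanishes because differentiating the normalization $\sum_{s} \pi_{\Theta}(s) = 1$ from \eqref{eq: balance equation} yields $\sum_{s} \nabla \pi_{\Theta}(s) = 0$. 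What remains is $\sum_{s} \nabla \pi_{\Theta}(s)\, d(s,\Theta) - \sum_{s} \nabla \pi_{\Theta}(s) \sum_{s'} p_{\Theta}(s,s')\, d(s',\Theta)$.

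Second, I would differentiate the balance equation $\sum_{s} \pi_{\Theta}(s)\, p_{\Theta}(s,s') = \pi_{\Theta}(s')$ from \eqref{eq: balance equation}, obtaining $\sum_{s} \nabla \pi_{\Theta}(s)\, p_{\Theta}(s,s') = \nabla \pi_{\Theta}(s') - \sum_{s} \pi_{\Theta}(s)\, \nabla p_{\Theta}(s,s')$. Interchanging the order of summation in the remaining double sum and inserting this identity produces a term $\sum_{s'} d(s',\Theta)\, \nabla \pi_{\Theta}(s')$ that, after relabeling the summation index, is identical to $\sum_{s} \nabla \pi_{\Theta}(s)\, d(s,\Theta)$ and hence cancels it exactly. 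The only surviving contribution is $\sum_{s} \pi_{\Theta}(s) \sum_{s'} \nabla p_{\Theta}(s,s')\, d(s',\Theta)$, which is precisely the second term of the statement.

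Combining this with the $\nabla \mathcal{T}_{\Theta}(s)$ term from the product rule and factoring out $\pi_{\Theta}(s)$ yields the claimed identity. Throughout, Proposition~\ref{recurrent_state} is what guarantees that $\pi_{\Theta}$ and the differential throughput $d(\cdot,\Theta)$ are well defined and unique, so that the stationary form of $\xi(\Theta)$, the Bellman and balance equations, and all the interchanges of differentiation and summation used above are legitimate. The crux is the engineered double cancellation: the Bellman substitution kills the $\xi(\Theta)$ term via normalization, and the differentiated balance equation kills the two $\nabla \pi_{\Theta}$ terms against each other, leaving an expression in the computable quantities $\nabla p_{\Theta}$ and $d(\cdot,\Theta)$ alone.
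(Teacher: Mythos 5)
Your proof is correct and follows essentially the same route as the paper's: product rule on $\xi(\Theta)=\sum_{s}\pi_{\Theta}(s)\mathcal{T}_{\Theta}(s)$, elimination of the $\xi(\Theta)$ term via $\sum_{s}\nabla\pi_{\Theta}(s)=0$, substitution of the Bellman equation for $\mathcal{T}_{\Theta}(s)-\xi(\Theta)$, and cancellation of the $\nabla\pi_{\Theta}$ terms through the differentiated balance equation. The only difference is a trivial reordering of the normalization and Bellman substitutions.
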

The proof of Theorem~\ref{prop_policy_gradient} is provided in Appendix~\ref{appendix:prop_policy_gradient}.

\subsection{Idealized Gradient Algorithm}
As stated in~\cite{Bertsekas1995}, the idealized gradient algorithm is formulated based on results obtained in Theorem~\ref{prop_policy_gradient} as follows:
\begin{equation}
\label{idealized_algorithm_theta}
\Theta_{k+1} = \Theta_{k} + \rho_{k} \nabla \xi(\Theta_{k}),
\end{equation}
where $\rho_{k}$ is a step size. To guarantee the convergence of the algorithm, the step size $\rho_{k}$ must be nonnegative, deterministic, and satisfies the following constraints:
\begin{equation}
\label{eq:step_size}
\sum_{k=1}^{\infty}\rho_{k} = \infty, \mbox{ and } \sum_{k=1}^{\infty} ( \rho_{k} )^{2} < \infty	.
\end{equation}
Specifically, the step size has to approach to zero when the time step approaches to infinity. For the policy gradient method, the algorithm will begin with an initial parameter vector $\Theta_{0} \in \mathfrak{R}^{|\mathcal{S}|}$, and the parameter vector $\Theta$ will be adjusted at each time step by using ~(\ref{idealized_algorithm_theta}). Under Proposition~\ref{derivatives} in~\cite{Bertsekas1995}, it is proved that $\lim_{k \rightarrow \infty} \nabla \xi(\Theta_{k}) = 0$, and thus $\xi(\Theta_{k})$ converges.

\subsection{Online Reinforcement Learning Algorithm}

By calculating the gradient of the function $\xi(\Theta_{k})$ with respect to $\Theta$ at each time step $k$, the average throughput $\xi(\Theta_{k})$ can be maximized based on the idealized gradient algorithm. Nevertheless, the gradient of the average throughput $\xi(\Theta_{k})$ may not be exactly calculated if the size of the state space $\mathcal{S}$ is very large. Therefore, we propose the online reinforcement learning algorithm which can estimate the gradient $\xi(\Theta_{k})$ and update the parameter vector $\Theta$ at each time step in an online fashion.

From~(\ref{eq:condition_1}), as $\sum_{a \in \mathcal{A}} \chi_{\Theta}(s,a) = 1$, we can obtain that $\sum_{a \in \mathcal{A}} \nabla \chi_{\Theta}(s,a) = 0$. Therefore, from~(\ref{eq:average_throguhput_defination}), we have 
\begin{equation}
\begin{aligned}
\nabla {\mathcal{T}}_{\Theta} (s)  = & \sum_{a \in \mathcal{A}} \nabla \chi_{\Theta}(s, a) {\mathcal{T}} ( s, a)
= 	\sum_{a \in \mathcal{A}} \nabla \chi_{\Theta}(s, a) ({\mathcal{T}} ( s, a )  - \xi(\Theta)).
\end{aligned}
\end{equation}

In addition, for all $s \in \mathcal{S}$, we have 
\begin{equation}
\sum_{s' \in \mathcal{S}}\nabla p_{\Theta}({s,s'}) d(a',\Theta) = \sum_{s' \in \mathcal{S}} \sum_{a \in \mathcal{A}} \nabla \chi_{\Theta}(s, a) p_{a}({s,s'}) d(s',\Theta).
\end{equation}

Thus, under Theorem~\ref{prop_policy_gradient}, the gradient of $\xi(\Theta)$ can be expressed as follows:
\begin{equation}
\begin{aligned}
\nabla \xi(\Theta)  = &	\sum_{s \in \mathcal{S}} \pi_{\Theta}(s) \Big(\nabla {\mathcal{T}}_{\Theta} ( s ) + \sum_{s' \in \mathcal{S}}\nabla p_{\Theta}({s,s'}) d(s',\Theta) \Big) \\
= &	\sum_{s \in \mathcal{S}} \pi_{\Theta}(s)\Big(\sum_{a \in \mathcal{A}} \nabla \chi_{\Theta}(s, a) \big({\mathcal{T}} ( s, a ) - \xi(\Theta)\big)		
+ \sum_{s' \in \mathcal{S}}\sum_{a \in \mathcal{A}} \nabla \chi_{\Theta}(s, a) p_{a}({s,s'}) d(s',\Theta) \Big) \\
= &	\sum_{s \in \mathcal{S}} \pi_{\Theta}(s) \sum_{a \in \mathcal{A}} \nabla \chi_{\Theta}(s, a)  \Big(\big({\mathcal{T}} ( s, a ) - \xi(\Theta)\big) 
+ \sum_{s' \in \mathcal{S}} p_{a}(s, s') d(s', \Theta) \Big) \\ 
= &	\sum_{s \in \mathcal{S}} \sum_{a \in \mathcal{A}} \pi_{\Theta}(s) \nabla \chi_{\Theta}(s, a) q_{\Theta}(s,a),
\end{aligned}
\end{equation}
where 
\begin{equation}
\begin{aligned}
q_{\Theta}(s,a)&= \Big({\mathcal{T}} ( s, a ) - \xi(\Theta)\Big) + \sum_{s' \in \mathcal{S}}p_{a}({s,s'}) d(s',\Theta) = \mathbb{E}_{\Theta} \Bigg[\sum_{k=0}^{T-1}\big( {\mathcal{T}} ( s_k, a_k) - \xi(\Theta) \big) | s_{0} = s, a_{0}=a \Bigg].
\end{aligned}
\end{equation}

Here $T=\min\{k>0 | s_{k}=s^\dagger\}$ is the first future time that the learning algorithm visits the recurrent state $s^\dagger$. In addition, $q_{\Theta}(s,a)$ can be expressed as the differential throughput if the ST chooses action $a$ at state $s$ based on policy $\chi_{\Theta}$. Then, we introduce Algorithm~\ref{algorithm0} that updates the parameter vector $\Theta$ at each time it visits the recurrent state $s^\dagger$ as follows. 
\renewcommand{\baselinestretch}{1}
\begin{algorithm}
	\caption{Algorithm to update parameter vector $\Theta$ at recurrent state $s^\dagger$}
	\label{algorithm0}
	\begin{algorithmic}[1]
		\State \textbf{Inputs:} $\nu$, $\rho_m$, and $\Theta_{0}$.
		\State \textbf{Initialize:} initiate parameter vector $\Theta_{0}$ and randomly select a policy for the ST.
		\For{\textit{k=1 to T}}
		\State{Update current state $s$}
		\If{$s_k \equiv s^\dagger$}
		\begin{equation}
		\label{theta_al1}
		\Theta_{m+1} = \Theta_{m} + \rho_{m}F_{m}(\Theta_{m},\widetilde{\xi}_{m}),
		\end{equation}
		\begin{equation}
		\label{xi_al1}
		\widetilde{\xi}_{m+1} = \widetilde{\xi}_{m} + \nu \rho_{m}\sum_{k'=k_{m}}^{k_{m+1}-1}\Big({\mathcal{T}}(s_{k'}, a_{k'}) - \widetilde{\xi}_{m}\Big),
		\end{equation}
		\quad \quad where
		\begin{equation}
		F_{m}(\Theta_{m},\widetilde{\xi}_{m}) = \sum_{k'=k_{m}}^{k_{m+1}-1} \widetilde{q}_{\Theta_{m}}(s_{k'},a_{k'}) \frac{\nabla \chi_{\Theta_{m}}(s_{k'},a_{k'})}{\chi_{\Theta_{m}}(s_{k'},a_{k'})},
		\end{equation}
		\begin{equation}
		\widetilde{q}_{\Theta_{m}}(s_{k'},a_{k'}) = \sum_{k=k'}^{k_{m+1}-1}\Big({\mathcal{T}} (s_{k}, a_{k}) - \widetilde{\xi}_{m}\Big).
		\end{equation}
		\State{$m=m+1$}
		\EndIf
		\State{Update $\rho_m$}
		\EndFor
		\State {\textbf{Outputs:}} The optimal value of $\Theta$
	\end{algorithmic}
\end{algorithm}
\renewcommand{\baselinestretch}{1.533}
In Algorithm~\ref{algorithm0}, the step size $\rho_{m}$ satisfies (\ref{eq:step_size}) and $\nu$ is a positive constant. $F_{m}(\Theta_{m},\widetilde{\xi}_{m})$ is the estimated gradient of the average throughput calculated by the cumulative sum of the total estimated gradient of the average throughput between the $m$-th and $(m+1)$-th visits of the algorithm to the recurrent state $s^\dagger$. Additionally, the gradient of the randomized parameterized policy function in (\ref{eq:randomized_parameterized_function}) is derived as $\nabla \chi_{\Theta_{m}}(s_{k'},a_{k'})$. Through Algorithm~\ref{algorithm0}, the parameter vector $\Theta$ and the estimated average throughput $\widetilde{\xi}$ are adjusted at each time step. Then, the convergence result of Algorithm~\ref{algorithm0} is derived as in Theorem~\ref{prop2}.
\begin{theorem}
	\label{prop2}
	Let $(\Theta_{0}, \Theta_{1}, \ldots, \Theta_{\infty})$ be a sequence of the parameter vectors generated by Algorithm~\ref{algorithm0}. Then, $\xi({\Theta_{m}})$ converges and 
	\begin{equation}
	\lim_{m\rightarrow \infty} \nabla \xi(\Theta_{m}) = 0,
	\end{equation}
	with probability one. 
\end{theorem}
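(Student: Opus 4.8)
The plan is to show that Algorithm~\ref{algorithm0} is a noisy (stochastic) realization of the idealized gradient recursion (\ref{idealized_algorithm_theta}) and then invoke a stochastic-approximation convergence result, following the regenerative-simulation framework of~\cite{Marbach2001}. The backbone is the exact gradient expression established in the derivation preceding the algorithm (built on Theorem~\ref{prop_policy_gradient}), namely
\begin{equation*}
\nabla\xi(\Theta) = \sum_{s\in\mathcal{S}}\sum_{a\in\mathcal{A}} \pi_{\Theta}(s)\,\nabla\chi_{\Theta}(s,a)\,q_{\Theta}(s,a),
\end{equation*}
together with the Bellman characterization of $q_{\Theta}(s,a)$ as an expected sum of centered throughputs accumulated over one regenerative cycle that ends at the next visit to $s^\dagger$. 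The goal is to argue that the per-cycle update direction produced by the algorithm reproduces this quantity on average, with controllable noise and bias.

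First I would establish that, conditioned on the history up to the $m$-th visit to $s^\dagger$, and assuming the true average throughput $\xi(\Theta_m)$ were available, the update direction $F_{m}(\Theta_m,\xi(\Theta_m))$ is an unbiased estimator of the gradient, i.e.\ $\mathbb{E}[F_{m}(\Theta_m,\xi(\Theta_m))\mid\Theta_m]=\nabla\xi(\Theta_m)$. This exploits the regenerative structure guaranteed by Proposition~\ref{recurrent_state}: the cycles between successive visits to $s^\dagger$ are i.i.d.\ given $\Theta_m$, so the empirical cycle contribution $\sum_{k'}\widetilde{q}_{\Theta_m}(s_{k'},a_{k'})\,\nabla\chi_{\Theta_m}/\chi_{\Theta_m}$ reproduces the stationary-weighted sum above; here $\nabla\chi/\chi$ is the score function and the baseline term is harmless because $\sum_{a\in\mathcal{A}}\nabla\chi_{\Theta}(s,a)=0$ follows from (\ref{eq:condition_1}), exactly as used to obtain $\nabla\mathcal{T}_{\Theta}$. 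Proposition~\ref{derivatives} then supplies the twice-differentiability and boundedness needed for these expectations to be finite and for $\nabla\xi$ to be Lipschitz.

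Next I would absorb the use of the estimate $\widetilde{\xi}_m$ in place of $\xi(\Theta_m)$. Writing $\widetilde{\xi}_m=\xi(\Theta_m)+\epsilon_m$, the error introduced in $F_m$ is linear in $\epsilon_m$, so the recursion (\ref{theta_al1}) takes the canonical form $\Theta_{m+1}=\Theta_m+\rho_m\big(\nabla\xi(\Theta_m)+w_m+b_m\big)$, where $w_m$ is a mean-zero martingale-difference noise and $b_m$ is a bias proportional to $\epsilon_m$. The companion recursion (\ref{xi_al1}) is itself a stochastic-approximation update for the stationary average throughput, driven by the constant $\nu$; running it on a suitably faster timescale drives $\epsilon_m\to 0$, hence $b_m\to 0$. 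The variance of $w_m$ stays bounded because the finite, irreducible, aperiodic chain (Proposition~\ref{recurrent_state}) has regeneration times with finite moments and the throughput $\mathcal{T}$ is bounded, so that $\mathbb{E}[\|F_m\|^2]$ is uniformly bounded and the noise term is summable under $\sum_m(\rho_m)^2<\infty$.

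The hard part will be controlling the coupling between the two timescales — proving that $\epsilon_m=\widetilde{\xi}_m-\xi(\Theta_m)$ vanishes even though the target $\xi(\Theta_m)$ is itself moving as $\Theta_m$ changes — and verifying the moment bounds on the regeneration-cycle lengths that justify the martingale and bounded-variance claims. Once the recursion is certified as an asymptotically unbiased, bounded-variance stochastic gradient scheme with step sizes obeying (\ref{eq:step_size}), the convergence theorem cited from~\cite{Bertsekas1995} (already invoked for the idealized algorithm) applies: $\xi(\Theta_m)$ converges and $\lim_{m\to\infty}\nabla\xi(\Theta_m)=0$ with probability one.
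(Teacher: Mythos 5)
Your overall route is the paper's: both reduce Algorithm~\ref{algorithm0} to a noisy gradient recursion built on the regenerative structure at $s^\dagger$ and then invoke a ``gradient method with diminishing errors'' convergence result. Two points in your plan, however, do not match what the algorithm actually requires. First, the per-cycle direction is not an unbiased estimator of $\nabla\xi(\Theta_m)$ itself: by Proposition 2 of~\cite{Marbach2001}, which the paper quotes, $\mathbb{E}[\mathbf{H}_m\mid\mathscr{F}_m]$ has first component $\mathbb{E}_{\Theta}[T]\,\nabla\xi(\Theta)+\mathscr{V}(\Theta)\big(\xi(\Theta)-\widetilde{\xi}(\Theta)\big)$, i.e.\ the gradient appears scaled by the mean cycle length $\mathbb{E}_{\Theta}[T]$ together with a cross term driven by the estimation error. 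The scaling is harmless --- it only rescales the effective step size, which still satisfies (\ref{eq:step_size}) because $\mathbb{E}_{\Theta}[T]$ is bounded above and below on a finite irreducible chain --- but it must be carried through to the final recursion, as the paper does in (\ref{eq:appendix_standard_form}).

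Second, and more substantively, your proposed resolution of the coupling between $\Theta_m$ and $\widetilde{\xi}_m$ --- running the $\widetilde{\xi}$ recursion ``on a suitably faster timescale'' --- does not describe Algorithm~\ref{algorithm0}. Both updates (\ref{theta_al1}) and (\ref{xi_al1}) use the \emph{same} step size $\rho_m$; the constant $\nu$ merely rescales one of them, so this is a single-timescale scheme and a two-timescale separation argument is not available. The paper instead treats the pair $\mathbf{r}^{k_m}=\left[\begin{array}{cc}\Theta_m & \widetilde{\xi}_m\end{array}\right]^\top$ as one coupled stochastic approximation, shows $\mathbf{r}^{k_{m+1}}-\mathbf{r}^{k_m}\rightarrow 0$, and then invokes Lemma 11 of~\cite{Marbach2001} to conclude that $\widetilde{\xi}_m$ and $\xi(\Theta_m)$ converge to a common limit; only after that does the $\Theta$-recursion collapse to the form (\ref{eq:appendix_standard_form}), to which the gradient-with-errors theorem of~\cite{Bertsekas1999} applies. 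You correctly flag this coupling as the hard part, but the tool you name would not close it; the single-timescale common-limit lemma is the missing ingredient.
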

The proof of Theorem~\ref{prop2} is provided in Appendix~\ref{appendix:prop2}.

With Algorithm~\ref{algorithm0}, we need to store all values of $\frac{\nabla \chi_{\Theta_{m}}(s_{k},a_{k})}{\chi_{\Theta_{m}}(s_{k},a_{k})}$ and  $\widetilde{q}_{\Theta_{m}}(s_{k},a_{k})$ between the $m$-th and $(m+1)$-th visits in order to update the values of the parameter vector $\Theta$. This may lead to slow processing especially when the size of the state space $\mathcal{S}$ is large. To deal with this issue, Algorithm~\ref{algorithm0} is modified to be able to update parameter vectors at every time slot with simple calculations. First, $F_{m}(\Theta_{m},\widetilde{\xi}_{m})$ is reformulated as follows:
\begin{equation}
\begin{aligned}
F_{m}(\Theta_{m},\widetilde{\xi}_{m})&= \sum_{k'=k_{m}}^{k_{m+1}-1} \widetilde{q}_{\Theta_{m}}(s_{k'},a_{k'}) \frac{\nabla \chi_{\Theta_{m}}(s_{k'},a_{k'})}{\chi_{\Theta_{m}}(s_{k'},a_{k'})} \\
&= \sum_{k'=k_{m}}^{k_{m+1}-1}\frac{\nabla \chi_{\Theta_{m}}(s_{k'},a_{k'})}{\chi_{\Theta_{m}}(s_{k'},a_{k'})} \sum_{k=k'}^{k_{m+1}-1}\big( {\mathcal{T}} (s_{k}, a_{k}) - \widetilde{\xi}_{m}\big)= \sum_{k'=k_{m}}^{k_{m+1}-1} \big( {\mathcal{T}} (s_{k}, a_{k}) - \widetilde{\xi}_{m}\big) z_{k+1}, 
\end{aligned}
\end{equation}
where
\begin{equation}
z_{k+1} = \left\{ 
\begin{array}{ll}
\frac{\nabla \chi_{\Theta_{m}}(s_{k},a_{k})}{\chi_{\Theta_{m}}(s_{k},a_{k})}, & \text{if} \phantom{1} k = k_{m},\\
z_{k}+\frac{\nabla \chi_{\Theta_{m}}(s_{k},a_{k})}{\chi_{\Theta_{m}}(s_{k},a_{k})}, & k=k_{m}+1,\ldots,k_{m+1}-1. \\
\end{array}
\right.
\label{eq:zupdate_time}
\end{equation}

Then, the algorithm now can be expressed as in Algorithm~\ref{algorithm1}, where $\nu$ is a positive constant and $\rho_{k}$ is the step size of the algorithm. 
\renewcommand{\baselinestretch}{1}
\begin{algorithm}
	\caption{Low-complexity algorithm to update $\Theta$ at every time step}
	\label{algorithm1}
	\begin{algorithmic}[1]
		\State \textbf{Inputs:} $\nu$, $\rho_k$, and $\Theta_{0}$.
		\State \textbf{Initialize:} initiate parameter vector $\Theta_{0}$ and randomly select a initial policy for the ST.
		\For{\textit{k=1 to T}}
		\State{Update current state $s_k$}
		\State{}
		\begin{equation}
		z_{k+1} = \left\{ 
		\begin{array}{ll}
		\frac{\nabla \chi_{\Theta_{k}}(s_{k},a_{k})}{\chi_{\Theta_{k}}(s_{k},a_{k})}, & \text{if} \phantom{1} s_{k} = s^\dagger,\\
		z_{k}+\frac{\nabla \chi_{\Theta_{k}}(s_{k},a_{k})}{\chi_{\Theta_{k}}(s_{k},a_{k})}, & \text{otherwise,} \\
		\end{array}
		\right.
		\label{eq:zupdate_state}
		\end{equation}
		\begin{equation}
		\Theta_{k+1} = \Theta_{k} + \rho_{k}({\mathcal{T}}( s_k, a_k ) -\widetilde{\xi}_{k})z_{k+1}, 
		\label{eq:updatetheta}
		\end{equation}
		\begin{equation}
		\widetilde{\xi}_{k+1} = \widetilde{\xi}_{k} + \nu\rho_{k}({\mathcal{T}}( s_k, a_k ) - \widetilde{\xi}_{k}).
		\label{eq:updatexi}
		\end{equation}
		\State{Update $\rho_k$}
		\EndFor
		\State {\textbf{Outputs:}} The optimal value of $\Theta$
		
	\end{algorithmic}
\end{algorithm}
\renewcommand{\baselinestretch}{1.533}
\subsection{Complexity Analysis}
Our proposed learning algorithm, i.e., Algorithm~\ref{algorithm1}, is very computationally efficient in terms of time and storage. This is due to the fact that the state and action variables in the algorithm can be updated iteratively in an online fashion without storing and using any information from history.

First, for the storage complexity, in Algorithm~\ref{algorithm1}, the ST just needs to update these parameter vectors, i.e., $\Theta_{k}$, $z_k$, and $\widetilde{\xi}_{k}$, where $\Theta_{k}$ is the parameter vector of the ST that we need to optimize, $z_k$ is an auxiliary variable used to compute the value of $\frac{\nabla \chi_{\Theta_{k}}(s_{k},a_{k})}{\chi_{\Theta_{k}}(s_{k},a_{k})}$ before it is used to update the value of $\Theta_{k}$, and $\widetilde{\xi}_{k}$ is the estimated value of the average throughput. Since Algorithm~\ref{algorithm1} works in an online fashion, these variables will be updated at each step, and we do not need to store all other values in the past. Thus, we can avoid the curse-of-storage which happens in the algorithms using values in history to make decisions.

Second, for the time complexity, the ST needs to do four steps in each time slot. Specifically, in the first step, based on the value of $\Theta$ in the previous step, the ST calculates the value of $\chi_{\Theta}(s,a)$ by using (\ref{eq:randomized_parameterized_function}) and decides which action to take. In the second step, the ST computes the value of $\frac{\nabla \chi_{\Theta_{k}}(s_{k},a_{k})}{\chi_{\Theta_{k}}(s_{k},a_{k})}$ and updates the value of $z$ as in (\ref{eq:zupdate_state}). In the last two steps, the ST updates the values of $\Theta$ and $\widetilde{\xi}$ as in (\ref{eq:updatetheta}) and (\ref{eq:updatexi}), respectively. Additionally, we note a very interesting point here for the calculation of (\ref{eq:zupdate_state}). Because of the special structure of $\chi_{\Theta}(s,a)$ as shown in (\ref{eq:zupdate_state}), instead of calculating the value of $\frac{\nabla \chi_{\Theta_{k}}(s_{k},a_{k})}{\chi_{\Theta_{k}}(s_{k},a_{k})}$ directly, we can transform it into an equivalent form by $1-\chi_{\Theta}(s,a)$ through few steps of mathematical manipulation. This can reduce the computation time considerably. 
 

\section{Performance Evaluation}
\label{sec:PE}

\subsection{Parameter Setting}
We perform intensive simulations to evaluate the performance of the proposed solutions under different parameter settings. In particular, when the channel is busy, we assume that if the ST harvests energy, it can successfully harvest one unit of energy with probability $\gamma = 0.9$. Otherwise, if the ST performs backscattering to transmit data, it can successfully transmit one unit of data with probability $\beta = 0.9$. When the channel is idle and if the ST wants to transmit data actively, the ST requires one unit of energy to transmit two units of data. The successful data transmission for the harvest-then-transmit mode is also assumed to be $\sigma = 0.9$. The probabilities $\gamma, \beta, \mbox{and } \sigma$ can be derived through experiments. Note that our proposed online learning algorithm does not require these information in advance. It can learn the dynamics of the environment to obtain the optimal policy for the ST. The maximum data size and the energy storage capacity are set to be 10 units. Unless otherwise stated, the idle channel probability and the packet arrival probability are 0.5. For the learning algorithm, i.e., Algorithm~\ref{algorithm1}, we use the following parameters for the performance evaluation. At the beginning, the ST will start with a randomized policy, i.e., stay idle or transmit data if the incumbent channel is idle, and harvest energy or backscatter data otherwise. We set the initial value of $\rho=10^{-5}$ and it will be updated after every 18,000 iterations as follows: $\rho_{k+1} = 0.9 \rho_{k}$~\cite{Marbach2001}. We also set $\nu = 0.01$. To evaluate the proposed solutions, we compare their performance with three other schemes.
\begin{itemize}
	\item \textit{Harvest-then-transmit (HTT):} this scheme lets the ST harvest energy when the channel is busy and transmit data when the channel becomes idle~\cite{park2013}.
	\item \textit{Backscatter communication:} with this scheme, the ST will only backscatter to transmit data when the incumbent channel is busy~\cite{LiuAmbient2013,Parks2014Turbocharging}.
	\item \textit{Random policy:} when the incumbent channel is idle, the ST will decide to stay idle or transmit data with the same probability, i.e., $0.5$. Similarly, when the incumbent channel is busy, the ST will either harvest energy or backscatter with the same probability of $0.5$.
\end{itemize}

\subsection{Numerical Results}

\subsubsection{Optimal Policy Obtained by MDP Optimization with Complete Information}
We first discuss the optimal policy obtained by the linear programming technique presented in Section~\ref{sec:markov}, and study how environment parameters impact the decisions of the ST. Figs.~\ref{fig:policy_02} and~\ref{fig:policy_08} show the optimal policy of the ST when the idle channel probability is low ($\eta = 0.2$) and high ($\eta = 0.8$), respectively. As shown in Fig.~\ref{fig:policy_02}, when the idle channel probability is $0.2$, i.e., the channel is often busy, the ST will only harvest energy when the energy state is low and the data state is high. However, when the idle channel probability is $0.8$, i.e., the channel is often idle, the ST will backscatter only if the energy storage is full and the data queue is not empty as shown in Fig.~\ref{fig:policy_08}. The reason is that when the ST actively transmits data, the ST can transmit two packets, and thus this policy is to reserve energy for the ST to transmit data when the channel is idle. 
\begin{figure*}[tbh]
	\centering
	\begin{subfigure}[b]{0.4\textwidth}
		\centering
		\includegraphics[scale=0.35]{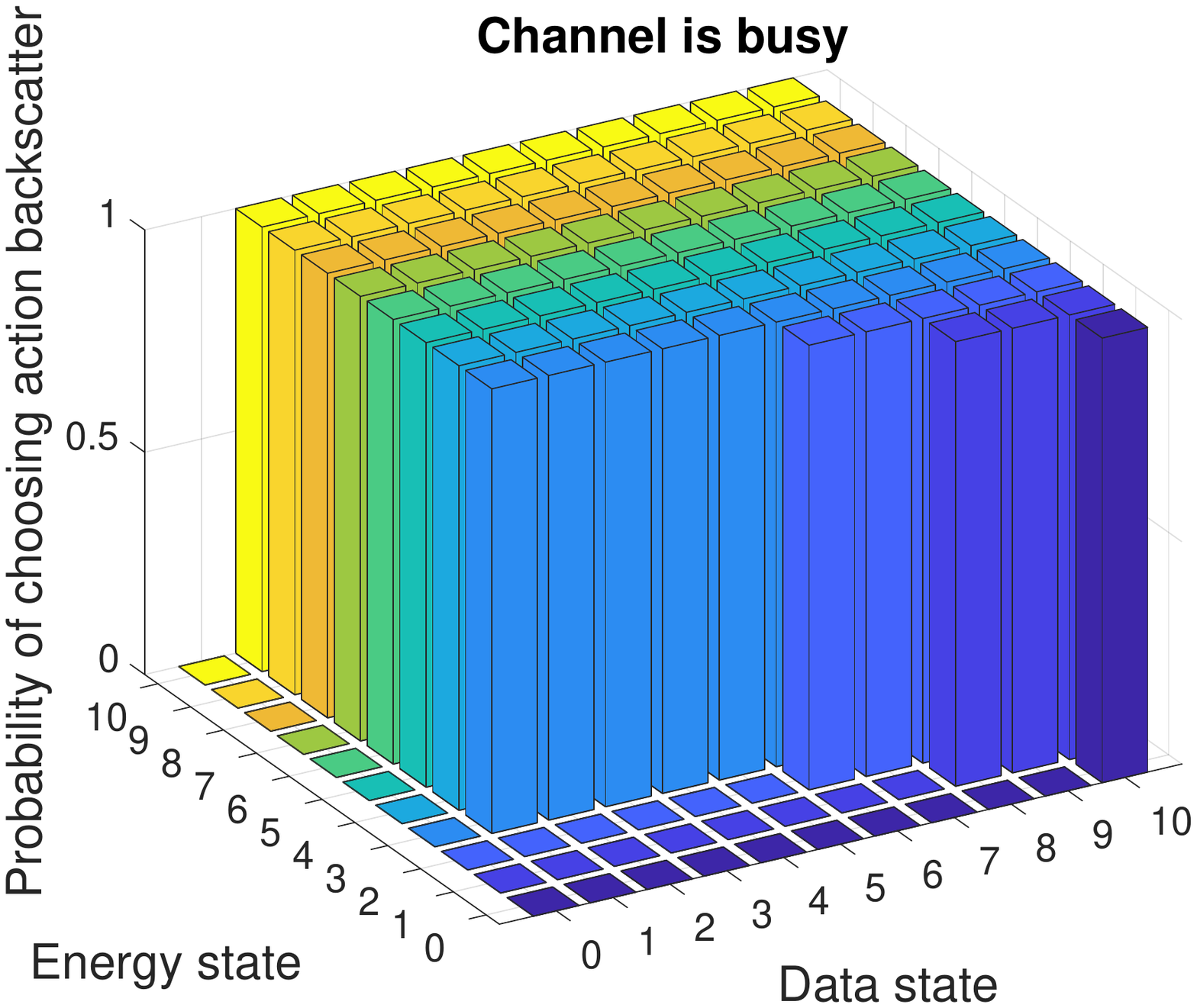}
		\caption{Channel is busy}
	\end{subfigure}%
	~ 
	\begin{subfigure}[b]{0.4\textwidth}
		\centering
		\includegraphics[scale=0.35]{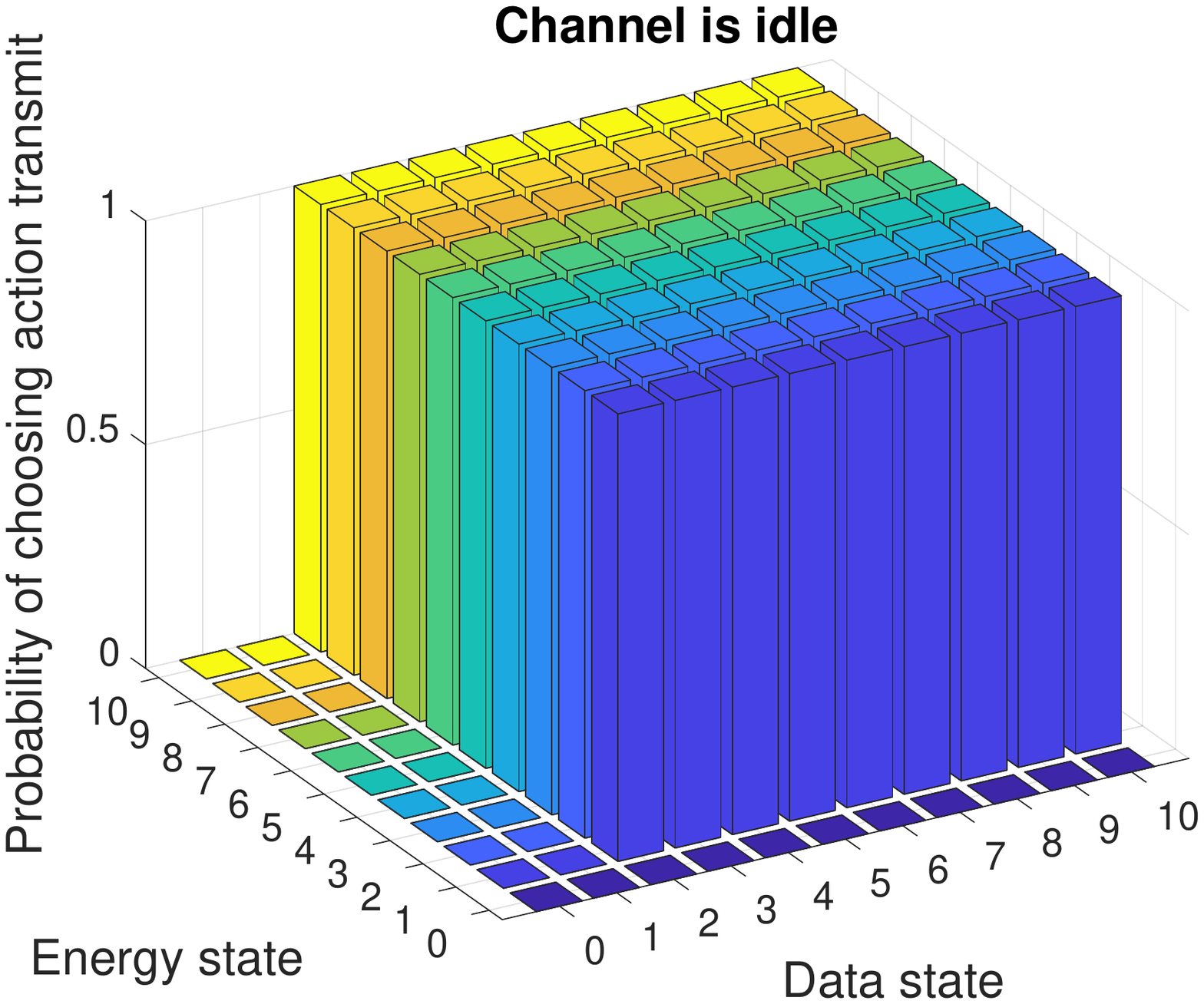}
		\caption{Channel is idle}
	\end{subfigure}
	\caption{Optimal policy of the ST when the channel idle probability $\eta$ = 0.2.} 
	\label{fig:policy_02}
\end{figure*}
\begin{figure*}[tbh]
	\centering
	\begin{subfigure}[b]{0.4\textwidth}
		\centering
		\includegraphics[scale=0.35]{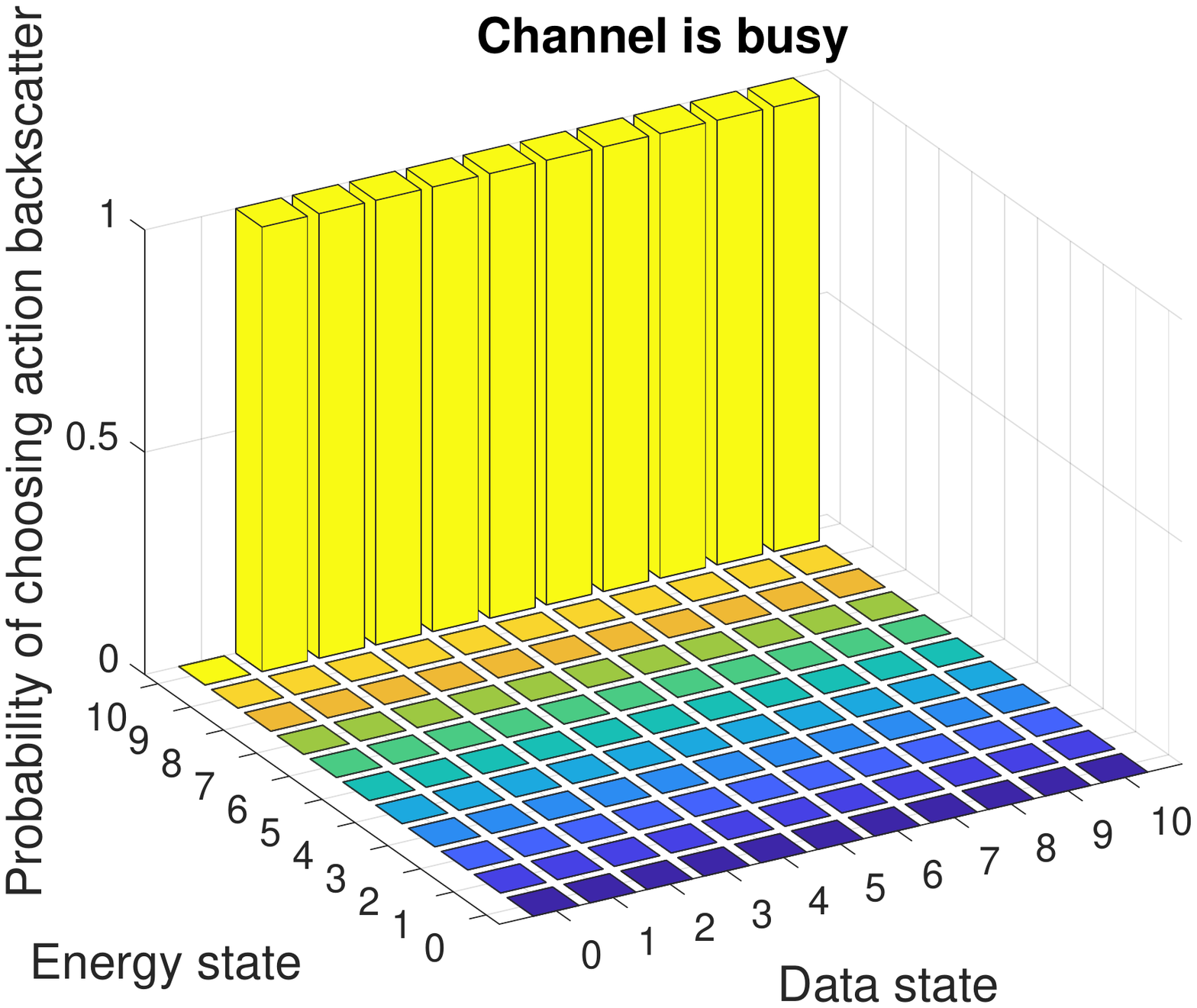}
		\caption{Channel is busy}
	\end{subfigure}%
	~ 
	\begin{subfigure}[b]{0.4\textwidth}
		\centering
		\includegraphics[scale=0.35]{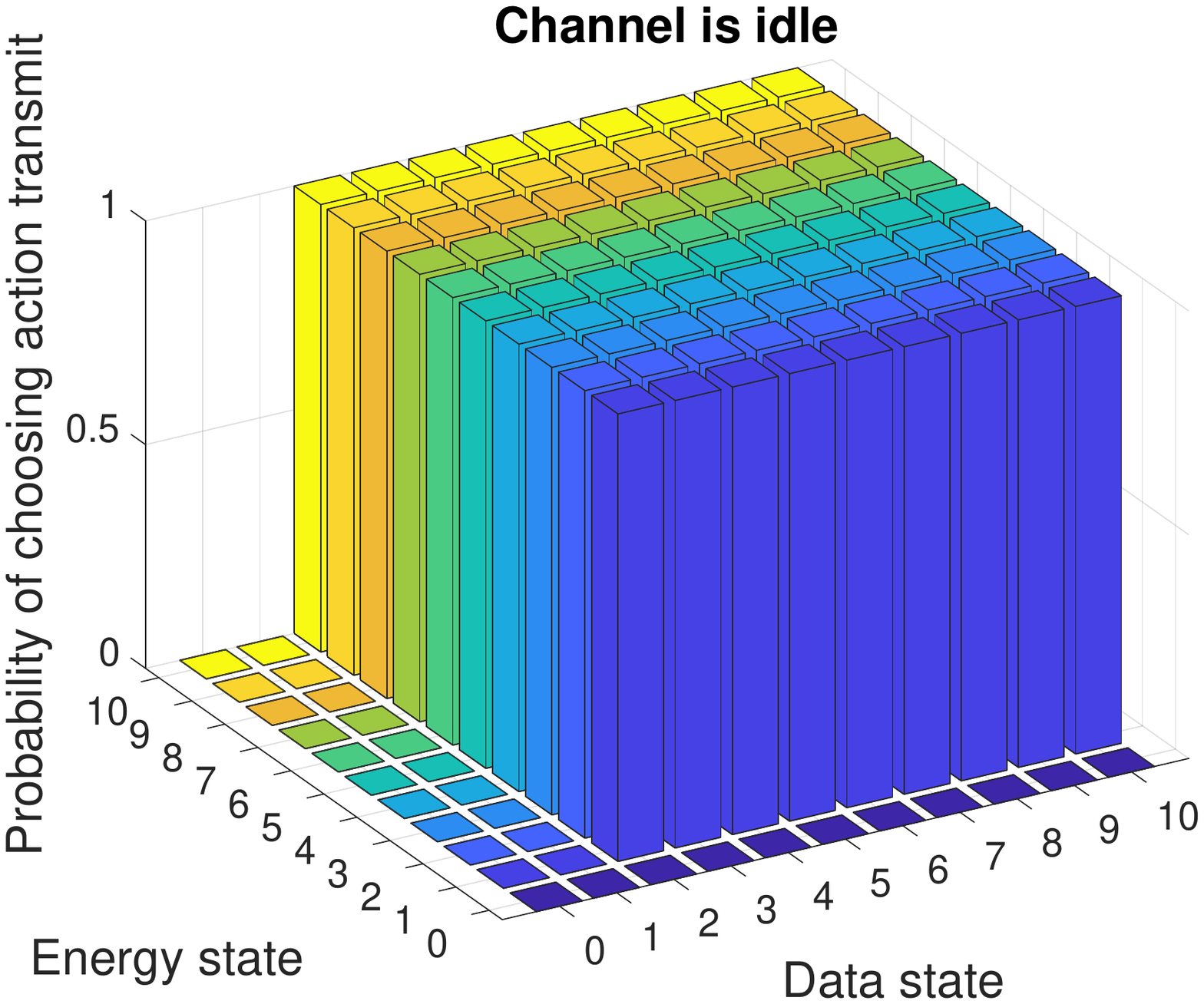}
		\caption{Channel is idle}
	\end{subfigure}
	\caption{Optimal policy of the ST when the channel idle probability $\eta$ = 0.8} 
	\label{fig:policy_08}
\end{figure*}

\subsubsection{Convergence of the Learning Algorithm}
Next, we show the learning process and the convergence of the proposed learning algorithm, i.e., Algorithm~\ref{algorithm1}, presented in Section~\ref{sec:LA}. First, we obverse the learning process of Algorithm~\ref{algorithm1} in the first 10,000 iterations. As shown in Fig.~\ref{fig:convergence}(a). In the first 4,000 iterations, the ST is still in the learning process to find the optimal values for the parameter vector $\Theta$, and thus its performance is fluctuated. However, after 4,000 iterations, the learning algorithm begins stabilizing, and thus its average throughput starts to increase. The average throughput of the ST achieves $0.68$ after $10^5$ iterations, and converges to $0.69$ after $5 \times 10^5$ iterations as shown in Fig.~\ref{fig:convergence}(b). The convergence result in Fig.~\ref{fig:convergence} also verifies our proof of convergence for the learning algorithm presented in Appendix~\ref{appendix:prop2}.

\begin{figure*}[tbh]
	\centering
	\begin{subfigure}[b]{0.45\textwidth}
		\centering
		\includegraphics[scale=0.37]{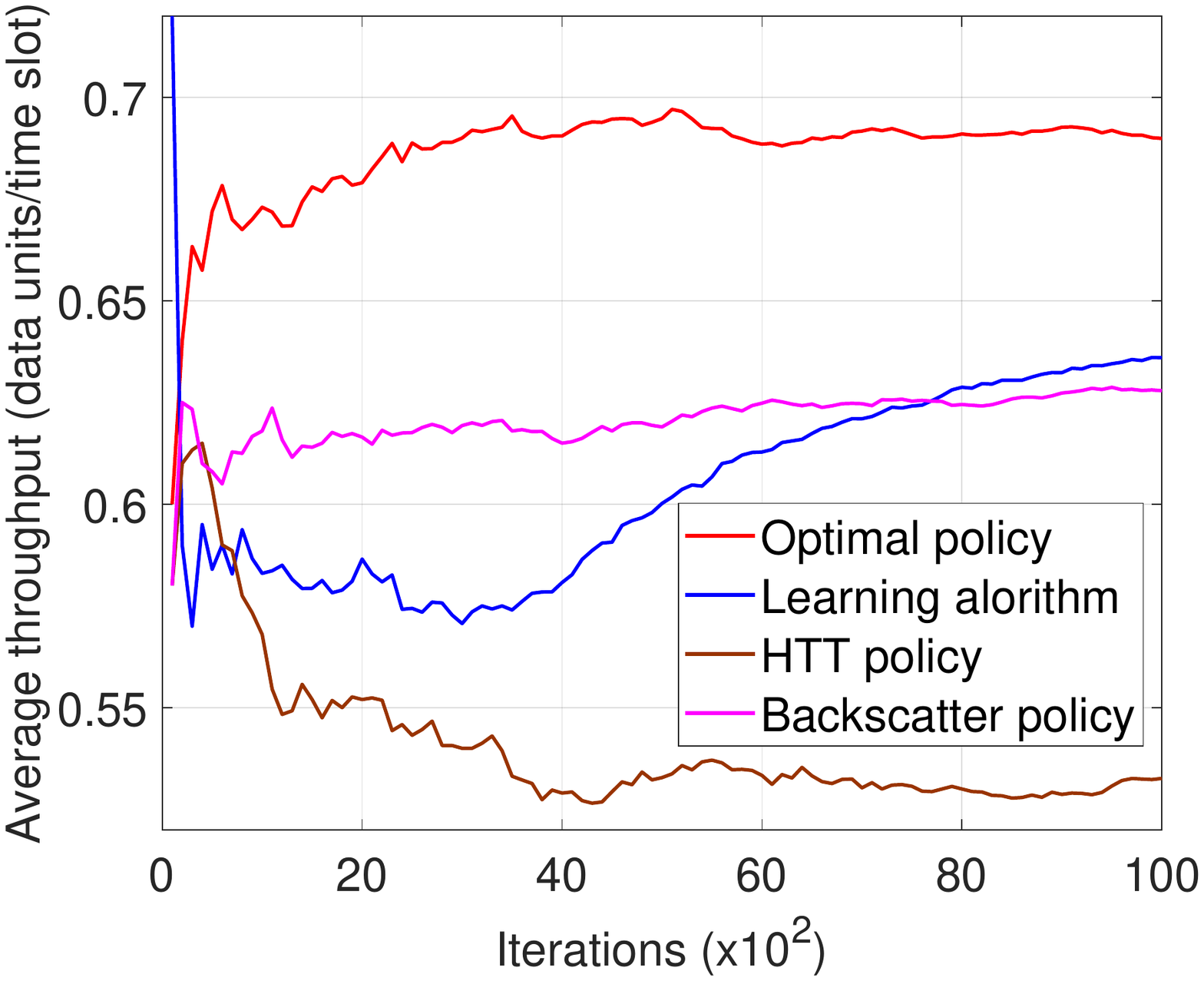}
		\caption{}
	\end{subfigure}%
	~ 
	\begin{subfigure}[b]{0.45\textwidth}
		\centering
		\includegraphics[scale=0.37]{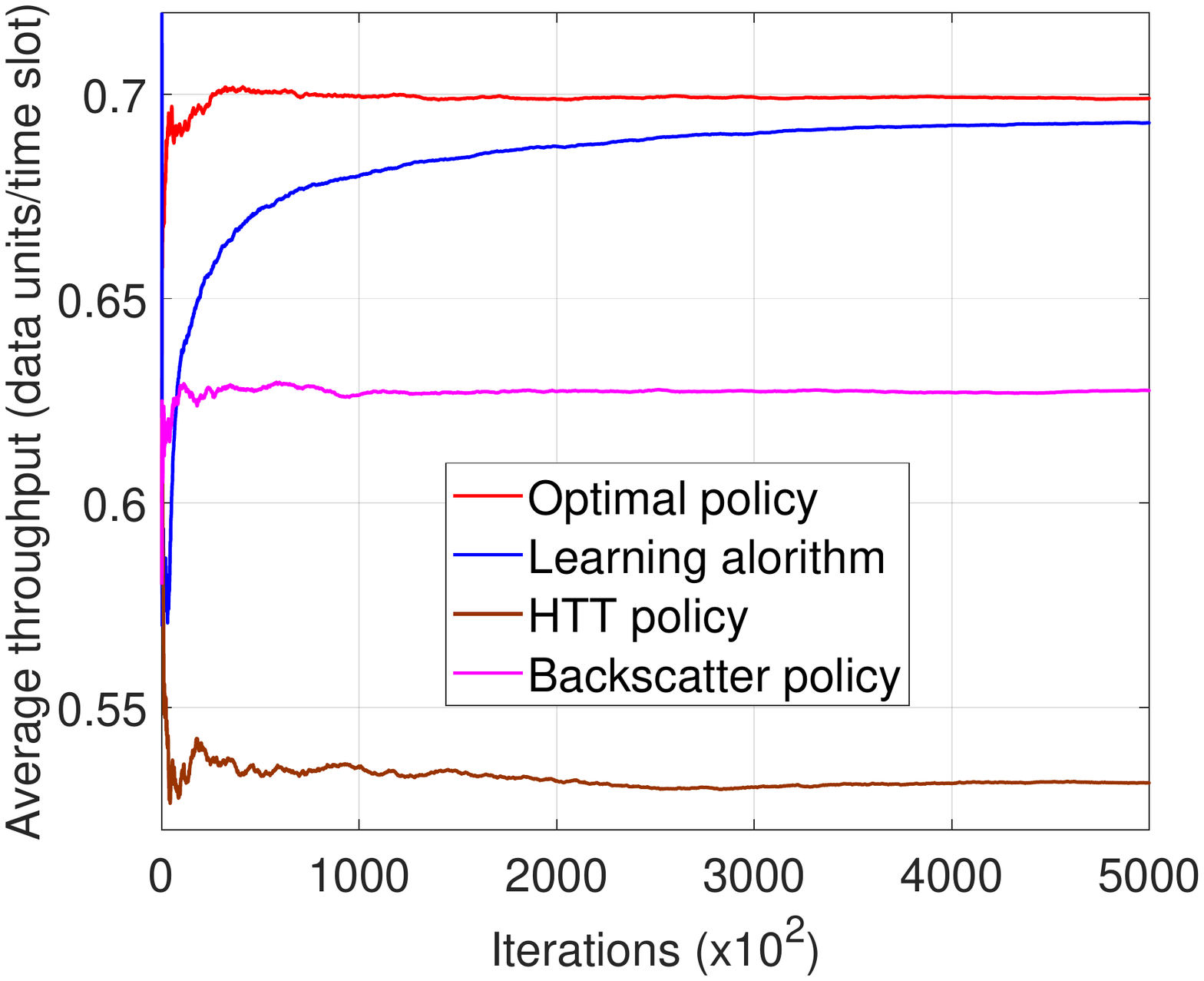}
		\caption{}
	\end{subfigure}
	\caption{Convergence of the learning algorithm in (a) the first $10^4$ iterations and (b) the first $5 \times 10^5$ iterations.} 
	\label{fig:convergence}
\end{figure*}

\subsubsection{Performance Comparison}
Next, we perform simulations to evaluate and compare performance of the proposed solutions, i.e., MDP optimization with complete information and the proposed online reinforcement learning algorithm (Algorithm~\ref{algorithm1}) with incomplete information, with three other policies, i.e., HTT, backscatter, and random policies, in terms of average throughput, delay, and blocking probability. 

\paragraph{Average throughput}
In Figs.~\ref{fig:throughput}(a) and (b), we show the average throughput of the ST obtained by the different policies when the idle channel probability and the packet arrival probability are varied, respectively. As observed in Fig.~\ref{fig:throughput}(a), as the idle channel probability increases, the average throughput of the HTT policy increases accordingly. This is due to the fact that when the incumbent channel is likely to be idle, the ST has more opportunity to transmit data from the data queue. However, when the idle channel probability is very high, i.e., $\eta \geq 0.6$, the average throughput obtained by the HTT policy will be reduced as the ST has less time to harvest energy, resulting in a low throughput. For the backscatter policy, since the ST only backscatters to transmit data, its performance will depend on the channel status. As a result, the average throughput of the ST in this case decreases as the idle channel probability increases. 

By switching among the actions of harvesting energy, backscattering, and active transmitting data, the optimal policy obtained from the aforementioned MDP-based optimization formulation achieves the highest throughput. Intuitively, when the idle channel probability is lower than 0.4, the ST will prefer the backscatter mode, and it will switch to HTT mode when the idle channel probability is higher than 0.4. We observe that the learning algorithm yields the throughput close to that of the optimal policy, and it is much higher than that of the other policies, e.g., about 17\% and 50\% higher than that of the random policy and the backscatter policy when $\eta = 0.7$, respectively.

\begin{figure*}[tbh]
	\centering
	\begin{subfigure}[b]{0.45\textwidth}
		\centering
		\includegraphics[scale=0.37]{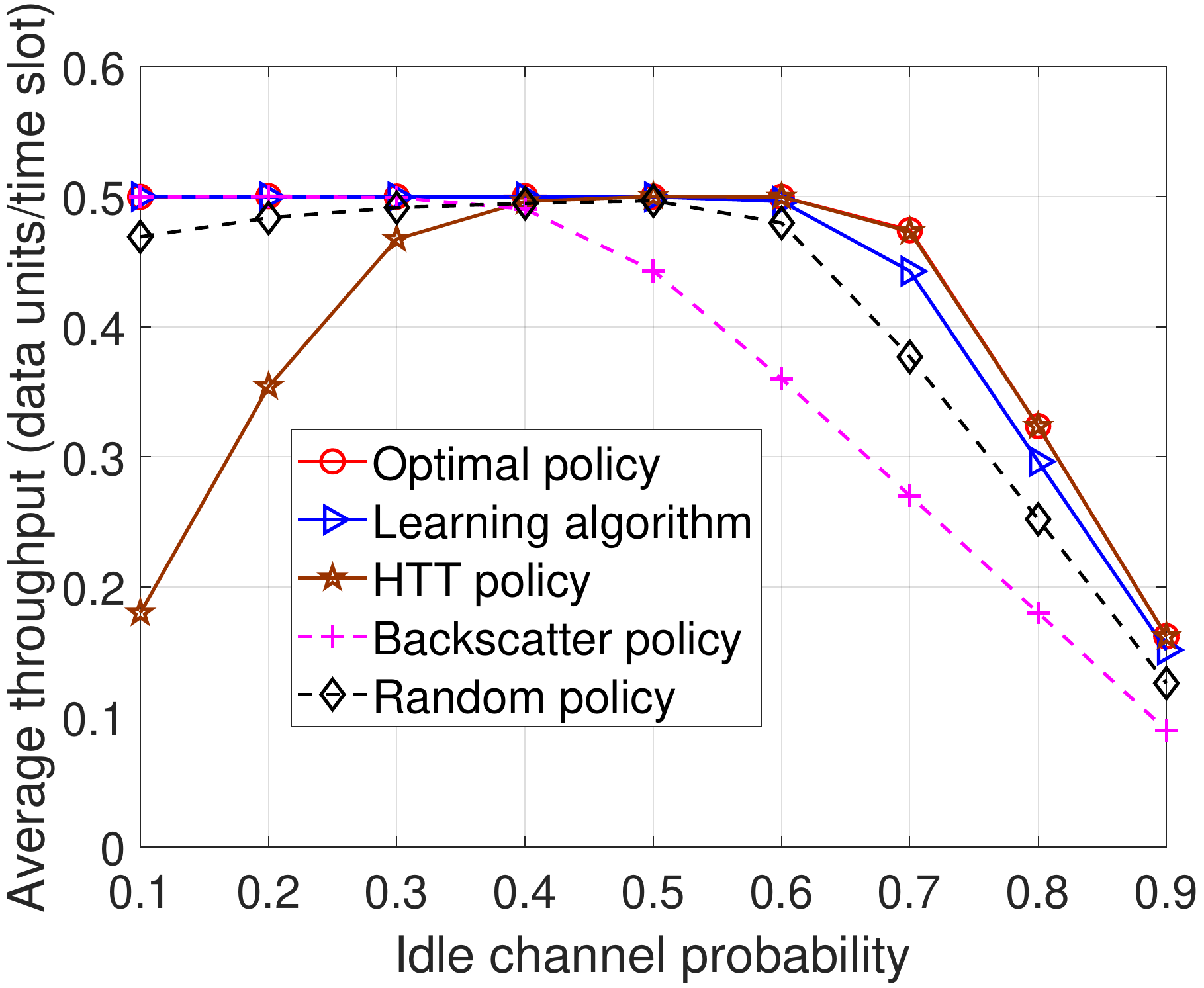}
		\caption{Idle channel probability is varied}
	\end{subfigure}%
	~ 
	\begin{subfigure}[b]{0.45\textwidth}
		\centering
		\includegraphics[scale=0.37]{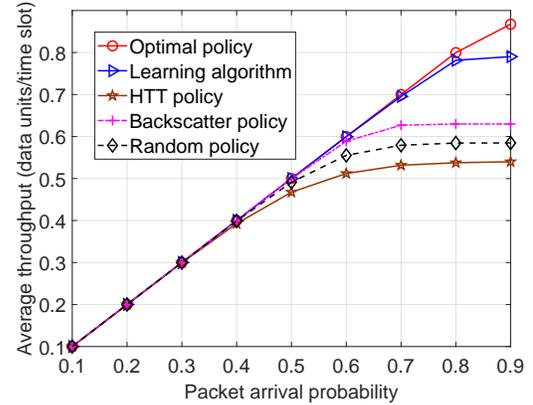}
		\caption{Packet arrival probability is varied}
	\end{subfigure}
	\caption{Average throughput of the secondary system.} 
	\label{fig:throughput}
\end{figure*}
Fig.~\ref{fig:throughput}(b) presents the throughput of the system when the packet arrival probability is varied. Clearly, when the packet arrival probability increases, the throughputs of all the policies increase. Under the small packet arrival probability, i.e., less than 0.4, all the policies yield almost the same throughput. This is due to the fact that when the number of data units in the data queue is very low, the ST has sufficient opportunity to transmit and/or backscatter its data as the probabilities that the incumbent channel is idle and busy are the same, i.e., $\eta = 0.5$. Similar to the case when the idle channel probability is varied, when the packet arrival probability is higher than 0.4, the optimal policy achieves the highest throughput followed by the learning algorithm. For example, when the packet arrival probability is 0.9, the throughput gain from using the learning algorithm can be up to about 50\% compared to the HTT policy.

\paragraph{Average delay and average blocking probability}
In Fig.~\ref{fig:packet} and Fig.~\ref{fig:blocking}, we examine the performance of the ST in terms of the average numbers of data units waiting in the data queue and the blocking probability, respectively. Specifically, in Fig.~\ref{fig:packet}(a) when the idle channel probability is lower than $0.6$, the average numbers of data units waiting in the data queue obtained by the optimal policy and the learning algorithm are very small. However, when the idle channel probability increases from $0.6$ to $0.9$, the average numbers of data units waiting in the data queue obtained by the two policies increase dramatically. Similar trends are observed in Fig.~\ref{fig:blocking}(a) for the blocking probability of the ST. 

In Fig.~\ref{fig:packet}(b) and Fig.~\ref{fig:blocking}(b), as the packet arrival probability increases from $0.1$ to $0.8$, the average number of data units waiting in the data queue and the blocking probability obtained by the optimal policy and the learning algorithm slightly increase. However, they increase significantly when the packet arrival probability reaches $0.9$. Note that in all cases, the average numbers of data units waiting in the data queue and the blocking probability obtained by the optimal policy and the learning algorithm always achieve the best performance (as little as 20\% of the other policies). This result is especially useful in controlling quality of service for the ST. 

\begin{figure*}[tbh]
	\centering
	\begin{subfigure}[b]{0.45\textwidth}
		\centering
		\includegraphics[scale=0.37]{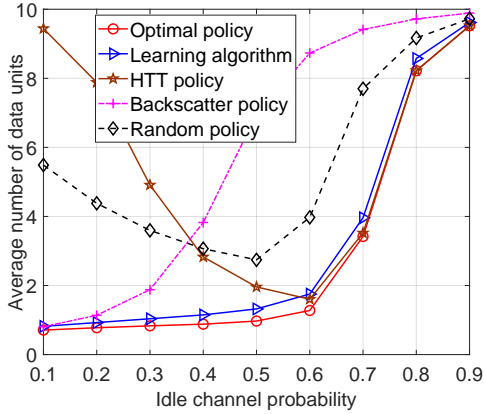}
		\caption{Idle channel probability is varied}
	\end{subfigure}%
	~ 
	\begin{subfigure}[b]{0.45\textwidth}
		\centering
		\includegraphics[scale=0.37]{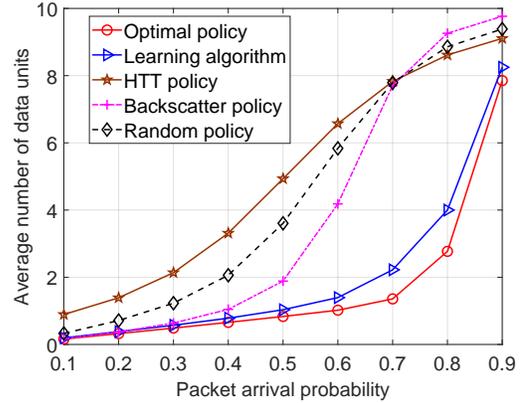}
		\caption{Packet arrival probability is varied}
	\end{subfigure}
	\caption{Average number of data units waiting in the data queue.} 
	\label{fig:packet}
\end{figure*}

\begin{figure*}[tbh]
	\centering
	\begin{subfigure}[b]{0.45\textwidth}
		\centering
		\includegraphics[scale=0.37]{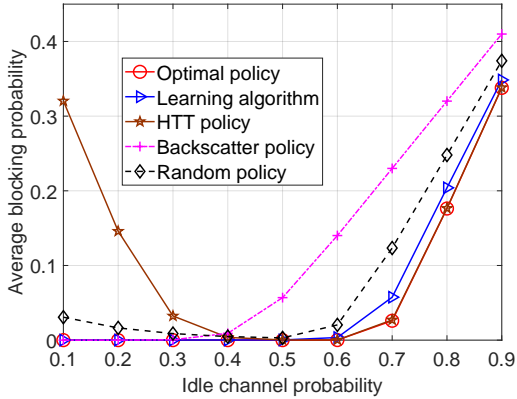}
		\caption{Idle channel probability is varied}
	\end{subfigure}%
	~ 
	\begin{subfigure}[b]{0.45\textwidth}
		\centering
		\includegraphics[scale=0.37]{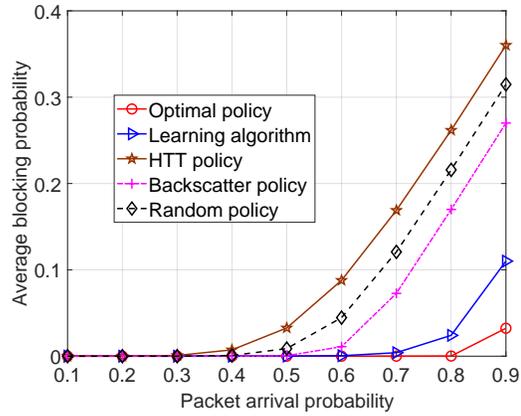}
		\caption{Packet arrival probability is varied}
	\end{subfigure}
	\caption{Average blocking probability.} 
	\label{fig:blocking}
\end{figure*}
\section{Conclusions}
\label{sec:conclusion}

In this paper, we have considered the DSA RF-powered ambient backscatter system in which the ST is equipped with RF-energy harvesting and ambient backscatter capabilities. In the system, the ST can harvest energy from incumbent signals or backscatter such signals to transmit data to its receiver when the incumbent channel is busy. To maximize the network performance under the dynamics of the environment and demands, the ST needs to choose the best action given its current state. We have introduced an MDP-based optimization framework to obtain the optimal policy for the ST. We have also developed a low-complexity online reinforcement learning algorithm that allows the ST to make optimal decisions when the complete environment parameters are not available. Through the numerical results, we have demonstrated that by using the proposed MDP optimization and the online reinforcement learning algorithm, the performance of the secondary system can be significantly improved compared with those of using HTT or backscatter individually. Moreover, the numerical results can provide insightful guidance for the ST to choose the best mode to operate.

\appendices
\section{The proof of Proposition~\ref{derivatives}}
\label{appendix:pro_derivatives}
Given state $s$, if the ST chooses to stay idle, i.e., $a=1$, or to harvest energy, i.e., $a=3$, the immediate throughput function $\mathcal{T}(s,a)=0$. Thus, we have
\begin{equation}
\begin{aligned}
{\mathcal{T}}_{\Theta} (s) & = \sum_{a \in \mathcal{A}} \chi_{\Theta}(s, a){\mathcal{T}}( s, a )
=\sum_{a \in \mathcal{A}} \frac{\exp\big(\theta_{ s,a }\big)} {\sum_{a' \in \mathcal{A}}\exp\big(\theta_{ s,a'  }\big)} \mathcal{T}( s, a )\\
&= \frac{\exp(\theta_{ s,2})\mathcal{T}(s,2) + \exp(\theta_{ s,4})\mathcal{T}(s,4)}{\exp(\theta_{ s,1})+\exp(\theta_{ s,2})+ \exp(\theta_{ s,3})+ \exp(\theta_{ s,4})}\\
&= \frac{\exp(\theta_{ s,2})\mathcal{T}(s,2) + \exp(\theta_{ s,4})\mathcal{T}(s,4)}{\mathcal{Q}},
\end{aligned}
\end{equation}
where $\mathcal{Q}=\exp(\theta_{ s,1})+\exp(\theta_{ s,2})+ \exp(\theta_{ s,3})+ \exp(\theta_{ s,4})$. Next, we derive the first derivative of the immediate throughput function ${\mathcal{T}}_{\Theta} (s)$ as $\nabla {\mathcal{T}}_{\Theta} (s)= \big[0,\frac{\partial {\mathcal{T}}_{\Theta} (s)}{\partial \theta_{ s,2}}, 0, \frac{\partial {\mathcal{T}}_{\Theta} (s)}{\partial \theta_{ s,4}}\big]$.
\begin{equation}
\begin{aligned}
\frac{\partial {\mathcal{T}}_{\Theta} (s)}{\partial \theta_{ s,2}} &= \frac{k \exp(\theta_{ s,2}) \mathcal{Q}-\frac{\partial \mathcal{Q}}{\partial \theta_{ s,2}}\exp(\theta_{ s,2})}{\mathcal{Q}^2} \mathcal{T}(s,2) - \frac{\frac{\partial \mathcal{Q}}{\partial \theta_{ s,2}}\exp(\theta_{ s,2})\exp(\theta_{ s,4})}{\mathcal{Q}^2}\mathcal{T}(s,4)\\
&= \frac{k \exp(\theta_{ s,2})\mathcal{Q}-k\exp(\theta_{ s,2})^2}{\mathcal{Q}^2}\mathcal{T}(s,2) - \frac{k \exp(\theta_{ s,2}) \exp(\theta_{ s,4})}{\mathcal{Q}^2}\mathcal{T}(s,4)\\
&=k\Big(\frac{\exp(\theta_{ s,2})}{\mathcal{Q}}-\big(\frac{\exp(\theta_{ s,2})}{\mathcal{Q}}\big)^2\Big)\mathcal{T}(s,2) - \frac{k \exp(\theta_{ s,2}) \exp(\theta_{ s,4})}{\mathcal{Q}^2}\mathcal{T}(s,4).
\end{aligned}
\end{equation}
Obviously, as $\exp(\theta_{ s,a}) > 0$ and $\theta$ is limited (see the proof of Theorem~\ref{prop2}), $\frac{\exp(\theta_{ s,2})}{\mathcal{Q}}$, $\big(\frac{\exp(\theta_{ s,2})}{\mathcal{Q}}\big)^2$, and $\frac{\exp(\theta_{ s,2}) \exp(\theta_{ s,4})}{\mathcal{Q}^2}$ are bounded. Thus, $\frac{\partial {\mathcal{T}}_{\Theta} (s)}{\partial \theta_{ s,2}}$ is bounded. Similarly, $\frac{\partial {\mathcal{T}}_{\Theta} (s)}{\partial \theta_{ s,4}}$ is also bounded. As a result, the first derivative of the immediate throughput function is bounded.

In the same way, we can derive the second derivative of the immediate throughput function ${\mathcal{T}}_{\Theta} (s)$ as $\nabla^2 {\mathcal{T}}_{\Theta} (s)= \big[0,\frac{\partial^2 {\mathcal{T}}_{\Theta} (s)}{\partial^2 \theta_{ s,2}}, 0, \frac{\partial^2 {\mathcal{T}}_{\Theta} (s)}{\partial^2 \theta_{ s,4}}\big]$.

\begin{equation}
\begin{aligned}
\frac{\partial^2 {\mathcal{T}}_{\Theta} (s)}{\partial^2 \theta_{ s,2}} &= k^2 \mathcal{T}(s,2)\Bigg(\frac{\exp(\theta_{ s,2})}{\mathcal{Q}}\Big(1-\frac{\exp(\theta_{ s,2})}{\mathcal{Q}}\Big)^2\Bigg)\\
&-k^2\mathcal{T}(s,4)\frac{\exp(\theta_{ s,2})}{\mathcal{Q}}\frac{\exp(\theta_{ s,4})}{\mathcal{Q}} + 2k^2\mathcal{T}(s,4)\big(\frac{\exp(\theta_{ s,2})}{\mathcal{Q}}\big)^2\frac{\exp(\theta_{ s,4})}{\mathcal{Q}}.
\end{aligned}
\end{equation}

Clearly, $\frac{\exp(\theta_{ s,2})}{\mathcal{Q}}$ and $\frac{\exp(\theta_{ s,4})}{\mathcal{Q}}$ are bounded. Thus, $\frac{\partial^2 {\mathcal{T}}_{\Theta} (s)}{\partial^2 \theta_{ s,2}}$ is bounded. Similarly, $\frac{\partial^2 {\mathcal{T}}_{\Theta} (s)}{\partial^2 \theta_{ s,4}}$ is also bounded. Therefore, the second derivative of the immediate throughput function is bounded. Similar with the immediate throughput function, the transition probability function $p_{\Theta}({s,s'})$ is also twice differentiable, and its first and second derivative are bounded.
\section{The proof of Theorem~\ref{prop_policy_gradient}}
\label{appendix:prop_policy_gradient}
This is to show how to calculate the gradient of the average throughput. In~(\ref{eq: balance equation}), with $\sum_{s \in \mathcal{S}} \pi_{\Theta}({s})  = 1$, we have $\sum_{s \in \mathcal{S}} \nabla \pi_{\Theta}({s})  = 0$. Recall that 
\begin{displaymath}
d(s, \Theta)={\mathcal{T}}_{\Theta}(s ) - \xi(\Theta) + \sum_{s \in \mathcal{S}} p_{\Theta}({s,s'}) d(s', \Theta), \phantom{5} \text{and} \phantom{5}  \xi(\Theta) = \sum_{s \in \mathcal{S}} \pi_{\Theta}({s}) {\mathcal{T}}_{\Theta} ( s ).
\end{displaymath}
Then, the gradient of $\xi(\Theta)$ is obtained as follows:
\begin{equation}
\label{xxxx0}
\begin{aligned}
\nabla \xi(\Theta)	 = &	\sum_{s \in \mathcal{S}} \pi_{\Theta}(s) \nabla \mathcal{T}_{\Theta}(s ) + \sum_{s \in \mathcal{S}} \nabla \pi_{\Theta}(s) \mathcal{T}_{\Theta}(s ) \\
= &	\sum_{s \in \mathcal{S}} \pi_{\Theta}(s) \nabla \mathcal{T}_{\Theta}(s ) +\sum_{s \in \mathcal{S}} \nabla \pi_{\Theta}(s)\big(\mathcal{T}_{\Theta}(s ) - \xi(\Theta) \big)		\\
= &	\sum_{s \in \mathcal{S}} \pi_{\Theta}(s) \nabla \mathcal{T}_{\Theta}(s ) +
\sum_{s \in \mathcal{S}} \nabla \pi_{\Theta}(s)\bigg(d(s, \Theta)		 -  \sum_{s \in \mathcal{S}} p_{\Theta}({s,s'}) d(s', \Theta) \bigg).
\end{aligned}
\end{equation}
We define
\begin{equation}
\label{appendix:eq:derivation}
\nabla \Big( \pi_{\Theta}(s) p_{\Theta}(s,s') \Big) = \nabla \pi_{\Theta}(s) p_{\Theta}(s,s') + \pi_{\Theta}(s) \nabla p_{\Theta}(s,s').
\end{equation}
and from~(\ref{eq: balance equation}), $\pi_{\Theta}(s')=\sum_{s \in \mathcal{S}} \pi_{\Theta}(s) p_{\Theta}(s,s')$. Then, equation (\ref{xxxx0}) can be expressed as follows:
\begin{equation}
\label{xxxx1}
\begin{aligned}
&\nabla \xi(\Theta)	 = \sum_{s \in \mathcal{S}} \pi_{\Theta}(s) \nabla \mathcal{T}_{\Theta}(s ) + \sum_{s \in \mathcal{S}} \nabla \pi_{\Theta}(s)\bigg(d(s, \Theta) 	-  \sum_{s \in \mathcal{S}} p_{\Theta}({s,s'}) d(s', \Theta) \bigg)	\\
= &\sum_{s \in \mathcal{S}} \pi_{\Theta}(s) \nabla \mathcal{T}_{\Theta}(s ) + \sum_{s \in \mathcal{S}} \nabla \pi_{\Theta}(s)d(s, \Theta) + \sum_{s,s' \in \mathcal{S}} \pi_{\Theta}(s) \nabla p_{\Theta}(s,s') d(s', \Theta) \\ 
&  - 	\sum_{s' \in \mathcal{S}} \nabla \Big( \sum_{s \in \mathcal{S}} \pi_{\Theta}(s) p_{\Theta}(s,s')  \Big) d(s', \Theta)		\\
= & \sum_{s \in \mathcal{S}} \pi_{\Theta}(s) \nabla \mathcal{T}_{\Theta}(s ) + \sum_{s \in \mathcal{S}} \nabla \pi_{\Theta}(s)d(s, \Theta)	+ \sum_{s,s' \in \mathcal{S}} \pi_{\Theta}(s) \nabla p_{\Theta}(s,s') d(s', \Theta) - \sum_{s' \in \mathcal{S}} \nabla \pi_{\Theta}(s') d(s', \Theta)	\\
= &\sum_{s \in \mathcal{S}} \pi_{\Theta}(s)\bigg( \nabla \mathcal{T}_{\Theta}(s ) + \sum_{s' \in \mathcal{S}} \nabla p_{\Theta}(s,s') d(s',\Theta)	\bigg).
\end{aligned}
\end{equation}
The proof is completed.
\section{The proof of Theorem~\ref{prop2}}
\label{appendix:prop2}
Let Proposition~\ref{recurrent_state}, Proposition~\ref{derivatives} hold, we prove the theorem in the following.

First, we reformulate the equations (\ref{theta_al1}) and (\ref{xi_al1}) in the specific form as follows:
\begin{equation}
\label{xxxx_2}
\begin{aligned}
&\Theta_{m+1} = \Theta_{m} + \rho_{m} \left( \sum_{k'=k_{m}}^{k_{m+1}-1} \Big( \sum_{k=k'}^{k_{m+1}-1}({\mathcal{T}} (s_{k}, a_{k}) - \widetilde{\xi}_{m}) \Big) \frac{\nabla \chi_{\Theta_{m}}(s_{k'},a_{k'})}{\chi_{\Theta_{m}}(s_{k'},a_{k'})} \right), \\
&\widetilde{\xi}_{m+1} = \widetilde{\xi}_{m} + \nu\rho_{m}\sum_{k'=k_{m}}^{k_{m+1}-1}({\mathcal{T}}(s_{k}, a_{k}) - \widetilde{\xi}_{m}).
\end{aligned}
\end{equation}
We define the vector $\mathbf{r}^{k_m}=\left[	\begin{array}{cc}	\Theta_m	&	\widetilde{\xi}_{m}	\end{array}	\right]^\top$, then~(\ref{xxxx_2}) becomes
\begin{equation}
\mathbf{r}^{k_{m+1}} = \mathbf{r}^{k_m} + \rho_{m} \mathbf{H}_m,
\end{equation}
where 
\begin{equation}
\mathbf{H}_m \!	= \!	\left[\!	\begin{array}{c}
\sum_{k'=k_{m}}^{k_{m+1}-1} \Big( \sum_{k=k'}^{k_{m+1}-1}({\mathcal{T}} (s_{k}, a_{k}) - \widetilde{\xi}_{m}) \Big) \frac{\nabla \chi_{\Theta_{m}}(s_{k'},a_{k'})}{\chi_{\Theta_{m}}(s_{k'},a_{k'})}	\\
\nu \sum_{k'=k_{m}}^{k_{m+1}-1}({\mathcal{T}}(s_{k}, a_{k}) - \widetilde{\xi}_{m})
\end{array}	\! \right].
\end{equation}
Denote $\mathscr{F} = \{ \Theta_0, \widetilde{\xi}_{0}, s_0,s_1,\ldots,s_m  \}$ as the history of the Algorithm~\ref{algorithm0}. Then, based on Proposition 2 in~\cite{Marbach2001}, we have
\begin{equation}
\begin{aligned}
\mathbb{E}[\mathbf{H}_m|\mathscr{F}_m]	& \!=\! \mathbf{h}_m	\!=\! \left[	\begin{array}{c}
\mathbb{E}_{\Theta}[T] \nabla \xi(\Theta) + \mathscr{V}(\Theta) \big( \xi(\Theta) - \widetilde{\xi}(\Theta) \big)	\\
\nu \mathbb{E}_{\Theta}[T] \big( \xi(\Theta) - \widetilde{\xi}(\Theta)	\big)
\end{array}	\right],
\end{aligned}
\end{equation}
where 
\begin{displaymath}
\mathscr{V}(\Theta)=\mathbb{E}_{\Theta}\Bigg[ \sum_{k'=k_{m+1}}^{k_{m+1}-1} \big( k_{m+1} - k'\big) \frac{\nabla \chi_{\Theta_{m}}(s_{k'},a_{k'})}{\chi_{\Theta_{m}}(s_{k'},a_{k'})} \Bigg].
\end{displaymath}
Recall that $T=\min\{k>0 | s_k = s^\dagger\}$ is the first future time that the algorithm visits the recurrent state $s^\dagger$. Therefore, the expression in (\ref{xxxx_2}) can be formulated as follows:
\begin{equation}
\mathbf{r}^{k_{m+1}} = \mathbf{r}^{k_m} + \rho_{m} \mathbf{h}_m + \varepsilon_m,
\end{equation}
where $\varepsilon_m = \rho (\mathbf{H}_m - \mathbf{h}_m)$ and note that $\mathbb{E}[\varepsilon_m|\mathscr{F}_m]=0$. As $\varepsilon_m$ and $\rho_m$ converge to zero almost surely, and $\mathbf{h}_{m}$ is bounded, we have
\begin{equation}
\lim_{m \rightarrow \infty} (\mathbf{r}^{k_{m+1}}-\mathbf{r}^{k_{m}}) = 0.
\end{equation}
After that, from Lemma 11 in~\cite{Marbach2001}, it is proved that $\xi(\Theta)$ and $\widetilde{\xi}(\Theta)$ converge to a common limit. Hence, the parameter vector $\Theta$ can be expressed as follows:
\begin{equation}
\label{eq:appendix_standard_form}
\Theta_{m+1} = \Theta_{m} + \rho_m \mathbb{E}_{\Theta_m}[T] (\nabla \xi(\Theta_m)+e_m) + \epsilon_m,
\end{equation}
where $\epsilon_m$ is a summable sequence and $e_m$ is an error term that converges to zero. As stated in~\cite{Bertsekas1999,Borkar2008}, (\ref{eq:appendix_standard_form}) is known as the gradient method with diminishing errors, and we can prove that $\nabla \xi(\Theta_m)$ converges to $0$, i.e., $\nabla_{\Theta} \xi(\Theta_{\infty})=0$.

\renewcommand{\baselinestretch}{1}


\begin{thebibliography}{1}
	\bibitem{globecom}
	N.~V.~Huynh, D.~T.~Hoang, D.~N.~Nguyen, E.~Dutkiewicz, D.~Niyato, and P.~Wang, ``Reinforcement Learning Approach for RF-Powered Cognitive Radio Network with Ambient Backscatter,'' to be presented in \emph{IEEE GLOBECOM}, Abu Dhabi, UAE, 9-13 December 2018.
	
	\bibitem{DSA}
	M.~Song, C.~Xin, Y.~Zhao, and X.~Cheng, ``Dynamic spectrum access: from cognitive radio to network radio,'' \emph{IEEE Wireless Communications}, vol. 19, no. 1, Feb. 2012, pp. 23-29.
	
	\bibitem{SAS_LSA}
	M.~D.~Mueck, S.~Srikanteswara, B.~Badic, ``Spectrum Sharing: Licensed Shared Access (LSA) and Spectrum Access System (SAS)'', [Online]. Available: \url{https://www.intel.com/content/www/us/en/wireless-network/spectrum-sharing-lsa-sas-paper.html}
	
	\bibitem{Huynh2017Survey}
	N.~V.~Huynh, D.~T.~Hoang, X.~Lu, D.~Niyato, P.~Wang, and D.~I.~Kim, ``Ambient Backscatter Communications: A Contemporary Survey,'' {\em IEEE Communications Surveys \& Tutorials}, 2018.
	
	\bibitem{LiuAmbient2013}
	V.~Liu, A.~Parks, V.~Talla, S.~Gollakota, D.~Wetherall, and J.~R.~Smith, ``Ambient backscatter: Wireless communication out of thin air,'' in \emph{Proceedings of the ACM SIGCOMM}, pp.~39-50, Hong Kong, China, Aug. 2013.

	\bibitem{Parks2014Turbocharging} A.~N.~Parks, A.~Liu, S.~Gollakota, and J.~S.~Smith, ``Turbocharging ambient backscatter communication,'' in \emph{ACM SIGCOMM Computer Communication Review}, vol. 44, no. 4, Oct. 2014, pp. 619-630. 
	
	\bibitem{Kellogg2014Wi-fi}
	B.~Kellogg, A.~Parks, S.~Gollakota, J.~R.~Smith, and D.~Wetherall, ``Wi-Fi backscatter: Internet connectivity for RF-powered devices,'' in \emph{Proceedings of the ACM SIGCOMM}, pp.~607-618, Chicago, USA, Aug. 2014.
	
	\bibitem{Zhang2014Enabling}
	P.~Zhang and D.~Ganesan, ``Enabling bit-by-bit backscatter communication in severe energy harvesting environments,'' in \emph{Proc. 11th USENIX Conf. Netw. Syst. Design Implement.}, Seattle, WA, USA, Apr. 2014, pp. 345–357.
	
	\bibitem{hasan2011}
	Z.~Hasan, H.~Boostanimehr, and V.~K.~Bhargava, ``Green cellular networks: A survey, some research issues and challenges,'' {\em IEEE Communications Surveys \& Tutorials}, vol. 13, no. 4, pp. 524-540, Fourth Quarter 2011.
	
	\bibitem{Lee2013Opportunistic}
	S.~Lee, R.~Zhang, and K.~Huang, ``Opportunistic wireless energy harvesting in cognitive radio networks,'' \emph{IEEE Transactions on Wireless Communications}, vol. 12, no. 9, Sept. 2013, pp. 4788-4799.
	
	
	\bibitem{park2013}
	S.~Park, H.~Kim, and D.~Hong, ``Cognitive radio networks with energy harvesting," \emph{IEEE Transactions on Wireless Communications}, vol. 12, no. 3, Mar. 2013, pp. 1386-1397.
	
	\bibitem{Derrick2016Multi}
	D.~W.~K.~Ng, E.~S.~Lo, and R.~Schober, ``Multiobjective resource allocation for secure communication in cognitive radio networks with wireless information and power transfer,'' \emph{IEEE Transactions on Vehicular Technology}, vol. 65, no. 5, May 2016, pp. 3166-3184.
	
	\bibitem{Banawan2016Mimo}
	K.~Banawan, and S.~Ulukus, ``Mimo wiretap channel under receiver-side power constraints with applications to wireless power transfer and cognitive radio,'' \emph{IEEE Transactions on Communications}, vol. 64, no. 9, Sept. 2016, pp. 3872-3885.
	
	\bibitem{Sakr2015Cognitive}
	A.~H.~Sakr, and E.~Hossain, ``Cognitive and energy harvesting-based D2D communication in cellular networks: Stochastic geometry modeling and analysis,'' \emph{IEEE Transactions on Communications}, vol. 63, no. 5, May 2015, pp. 1867-1880.
	
	\bibitem{Penichet2016Do}
	C.~P.~Penichet, A.~Varshney, F.~Hermans, C.~Rohner, and T.~Voigt, ``Do multiple bits per symbol increase the throughput of ambient backscatter communications?,'' in \emph{Proc. of 2016 International Conference on Embedded Wireless Systems and Networks}, Graz, Austria, Feb. 2016, pp. 355-360.
	
	\bibitem{Liu2014Enabling}
	V.~Liu, V.~Talla, and S.~Gollakota, ``Enabling instantaneous feedback with full-duplex backscatter,'' in \emph{Proc. of 20th annual international conference on Mobile computing and networking}, Maui, Hawaii, USA, Sept. 2014, pp. 67-78.
	
	\bibitem{Zhou2017An}
	X.~Zhou, G.~Wang, Y.~Wang, and J.~Cheng, ``An approximate BER analysis for ambient backscatter communication systems with tag selection'', \emph{IEEE Access}, vol. 5, Jul. 2017, pp. 22552-22558.
	
	\bibitem{Wang2016Ambient}
	G.~Wang, F.~Gao, R.~Fan, and C.~Tellambura, ``Ambient backscatter communications systems: detection and performance analysis,'' \emph{IEEE Transactions on Communications}, vol. 64, no. 11, Nov. 2016, pp. 4836-4846.
	
	\bibitem{Qian2016Noncoherent}
	J.~Qian, F.~Gao, G.~Wang, S.~Jin, and H.~B.~Zhu, ``Noncoherent detections for ambient backscatter system,'' \emph{IEEE Transactions on Wireless Communications}, vol. 6, no. 3, Apr. 2016, pp. 1412-1422.
	
	\bibitem{Liu2017Coding}
	Y.~Liu, G.~Wang, Z.~Dou, and Z.~Zhong, ``Coding and detection schemes for ambient backscatter communication systems,'' \emph{IEEE Access}, vol. 5, Mar. 2017, pp. 4947-4953.	
	
	\bibitem{Hoang2017Ambient}
	D.~T.~Hoang, D.~Niyato, P.~Wang, D.~I.~Kim, and Z.~Han, ``Ambient backscatter: A new approach to improve network performance for RF-powered cognitive radio networks,'' \emph{IEEE Transactions on Communications}, vol. 65, no. 9, Jun. 2017, pp. 3659-3674. 
	
	\bibitem{Kim2017Hybrid}
	S.~H.~Kim, and D.~I.~Kim, ``Hybrid Backscatter Communication for Wireless-Powered Heterogeneous Networks,'' \emph{IEEE Transactions on Wireless Communications}, vol. 16, no. 10, Oct. 2017, pp. 6557-6570.
	
	\bibitem{Han2017Wirelessly}
	K.~Han, and K.~Huang, ``Wirelessly powered backscatter communication networks: Modeling, coverage, and capacity,'' \emph{IEEE Transactions on Wireless Communications}, vol. 16, no. 4, Mar. 2017, pp. 2548-2561.
	
	\bibitem{Wang2018Stackelberg}
	W.~Wang, D.~T.~Hoang, D.~Niyato, P.~Wang, D.~I.~Kim, ``Stackelberg Game for Distributed Time Scheduling in RF-Powered Backscatter Cognitive Radio Networks'', \emph{IEEE Transactions on Wireless Communications}, vol. 17, no. 8, Aug. 2018, pp. 5606-5622.
	
	\bibitem{Puterman_1994_Book}
	M.~Puterman, {\em Markov Decision Processes: Discrete Stochastic Dynamic Programming}, Hoboken, NJ: Wiley, 1994.	
	\bibitem{Vou2016Could}
	G.~Vougioukas, S.~N.~Daskalakis, and A.~Bletsas, ``Could battery-less
	scatter radio tags achieve 270-meter range?,'' in \emph{Proc. of IEEE Wireless Power Transfer Conference (WPTC)}, Aveiro, Portugal, May 2016, pp. 1-3.
	\bibitem{Kimionis2012Bistatic}
	J.~Kimionis, A.~Bletsas, and J.~N.~Sahalos, ``Bistatic backscatter
	radio for tag read-range extension,'' in \emph{Proc. of IEEE International Conference on RFID-Technologies and Applications (RFID-TA)}, Nice, France, Nov. 2012, pp. 356-361.
	
	\bibitem{CompetitiveBook}
	J.~Filar and K.~Vrieze,	\emph{Competitive Markov Decision Processes}. Springer Press, 1997.
	
	\bibitem{Marbach2001}
	P.~Marbach, and J.~N.~Tsitsiklis, ``Simulation-based optimization of Markov reward processes,'' in \emph{IEEE Transactions on Automatic Control}, vol. 46, pp. 191-209, Feb. 2001. 
	
	\bibitem{Baxter2001}
	J.~Baxter, P.~L.~Barlett, L.~Weaver, ``Experiments with infinite-horizon, policy-gradient estimation,'' in \emph{Journal of Artificial Intelligence Research}, vol. 15, pp. 351-381, Nov. 2001.
	
	\bibitem{Bertsekas1995}
	D.~P.~Bertsekas, \emph{Nonlinear Programming}, Athena Scientific, Belmont, MA, 1995.
	
	\bibitem{Bertsekas1999}
	D.~P.~Bertsekas, and J.~N.~Tsitsiklis, ``Gradient convergence in gradient methods with errors,'' in \emph{SIAM Journal on Optimization}, vol. 10, issue 3, pp. 627-642, 1999. 
	
	\bibitem{Borkar2008}
	V.~S.~Borkar, \emph{Stochastic Approximation: A Dynamic Systems Viewpoint}, Cambridge University Press, 2008. 
	
\end{thebibliography}
\end{document}